\documentclass[twoside]{article}

\usepackage{amsmath,amsthm,amssymb}

\usepackage{newtxtext,newtxmath}

\usepackage[text={12.5cm,19cm},centering,paperwidth=17cm,paperheight=24cm]{geometry}

\usepackage[fontsize=10.4pt]{scrextend}

\pagestyle{myheadings}

\def\titlerunning#1{\gdef\titrun{#1}}

\makeatletter
\def\author#1{\gdef\autrun{\def\and{\unskip, }#1}\gdef\@author{#1}}
\def\address#1{{\def\and{\\\hspace*{15.6pt}}\renewcommand{\thefootnote}{}\footnote{#1}}\markboth{\autrun}{\titrun}}
\makeatother

\def\email#1{email: \href{mailto:#1}{#1} }
\def\subjclass#1{\par\bigskip\noindent\textbf{Mathematics Subject Classification 2020.} #1}
\def\keywords#1{\par\smallskip\noindent\textbf{Keywords.} #1}

\newenvironment{dedication}{\itshape\center}{\par\medskip}
\newenvironment{acknowledgments}{\bigskip\small\noindent\textit{Acknowledgments.}}{\par}

\newtheorem{thm}{Theorem}[section]

\newtheorem{lem}[thm]{Lemma}


\theoremstyle{definition}

\newtheorem*{rem}{Remark}

\numberwithin{equation}{section}

\frenchspacing

\parindent=15pt
\topmargin=-0.7cm
\setlength\headsep{.23in}


\usepackage{bbm}
\usepackage{csquotes}
\DeclareMathOperator{\Spec}{\mathrm{Spec}}
\DeclareMathOperator{\supp}{\mathrm{supp}}
\newcommand{\one}{\mathbbm{1}}
\newcommand{\rmu}{\rho_{\mu}}
\newcommand{\Delper}{\Delta^{{\rm per}}}
\newtheorem{assumption}[thm]{Assumption}

\usepackage[hyperfootnotes=false,colorlinks=true,allcolors=blue]{hyperref}

\begin{document}

\titlerunning{Length scales for BEC in the dilute Bose gas}

\title{\textbf{Length scales for BEC in the dilute Bose gas}}

\author{S{\o}ren Fournais}

\date{}

\maketitle

\address{S. Fournais, Aarhus University, Ny Munkegade 181, DK-8000 Aarhus C, Denmark; \email{fournais@math.au.dk}, and Institute for Advanced Study, 1 Einstein Drive, 08540 Princeton, USA; \email{fournais@ias.edu}}

\begin{dedication}
To Ari Laptev on the occasion of his 70th birthday
\end{dedication}

\begin{abstract}
We give a short proof of Bose Einstein Condensation of dilute Bose gases on length scales much longer than the Gross-Pitaevskii scale.
 \subjclass{81V73 ; 81V70 }
\keywords{Many-body quantum mechanics, Dilute Bose gases, Bogolubov theory, Bose-Einstein Condensation}
\end{abstract}


\section{Introduction}
In this article, I will review how recent techniques from \cite{BFS,FS} can be used to give simple proofs of Bose-Einstein Condensation (BEC) on different length scales for the dilute Bose gas.

One of the outstanding open problems in mathematical physics is to prove BEC in the thermodynamic limit for a continuous,  (weakly) interacting system\footnote{See \url{http://web.math.princeton.edu/~aizenman/OpenProblems_MathPhys/} for information on this and other challenging problems in Mathematical Physics.}. Unfortunately, we will have nothing new to report on this question. However, the substantially easier problem of proving BEC on shorter length scales has seen marked progress in recent years \cite{LS,ABS,BBCS0,BBCS2}.
The first paper cited gave the first proof of BEC on the so-called Gross-Pitaevskii length scale, which will be defined precisely below. On this length scale the energy gap between the ground state of the kinetic energy and the first excited state is of the same order of magnitude as the energy per particle in the system. Therefore it is feasible, but still not easy, to prove condensation on this length scale. 

The recent papers \cite{BS,BFS,FS} are dedicated to the study of the energy asymptotics of the Bose gas. By adapting the methods of these papers, one obtains easy proofs of condensation on length scales substantially longer than the Gross-Pitaevskii scale (but far shorter than \enquote{thermodynamic length scales}). The reference \cite{ABS} goes beyond the Gross-Pitaevskii length scale but with a very different method than ours and for a smaller range of lengths.

For background on the mathematics of the Bose gas we recommend the reviews \cite{LSSY,Rou} and references therein.

We proceed to give the precise statement of the result in the present paper.
We study the following Hamiltonian $H= H(N, L)$,
\begin{align}
H = \sum_{j=1}^N - \Delper_j + \sum_{j < k } v^{\rm per}(x_j-x_k).
\end{align}
Here $\Delper$ is the Laplace operator on $\Omega = (-L/2, L/2)^3$ with periodic boundary conditions and 
$v^{\rm per}= \sum_{j \in  {\mathbb Z}^3} v(x-Lj)$ is the periodic version of the potential $v$.
In other words, we identify $\Omega$ with the torus ${\mathbb R}^3/L {\mathbb Z}^3$ and consider the Laplacian and the potential $v$ as given on the torus.

The ground state energy $E = E(N, L)$ of $H$ is defined to be
\begin{align}
E(N, L) := \inf \Spec H(N, L).
\end{align}
The particle density in the box is $\rho:=N/L^3$.

Define furthermore,
\begin{align}
P_{\Omega} := L^{-3} | 1 \rangle \langle 1|, \qquad Q_{\Omega}  := \one - P_{\Omega},
\end{align}
i.e. $P_{\Omega} $ is the orthogonal projection in $L^2(\Omega)$ onto the constant functions and $Q_{\Omega} $ is the orthogonal projection to the complement.
Using these definitions, we define
\begin{align}\label{eq:SF_n0ognplus}
n_0:= \sum_{j=1}^N P_{\Omega,j}, \qquad n_{+} := \sum_{j=1}^N Q_{\Omega,j} = N - n_0.
\end{align}
The definition is based on the idea that low energy eigenstates of the system should concentrate in the constant function. Thus, $n_0$ counts the number of particles \enquote{in the condensate} and $n_{+}$ the number of particles \enquote{excited out of the condensate}.

That the ground state $\Psi$ of $H(N,\Omega)$ exhibits Bose-Einstein condensation in the constant function means by definition that 
\begin{align}
\langle \Psi, n_0 \Psi \rangle \geq c N,
\end{align}
for some $c>0$ depending only on the density and for all sufficiently large $N$.

We will study this problem for length scales $L$ that depend on the density $\rho$ in such a way that $L$ diverges as $\rho$ becomes small, i.e. as the system becomes dilute. We will prove that 
in this limit
\begin{align}\label{eq:SF_complete}
\frac{\langle \Psi, n_0 \Psi \rangle}{N} \rightarrow 1,\qquad \text{ or equivalently } \qquad 
\frac{\langle \Psi, n_{+} \Psi\rangle}{N} \rightarrow 0.
\end{align}

We will work under the following assumption on the potential.

\begin{assumption}\label{SF_assump:v}
The potential $v\neq 0$ is non-negative and spherically symmetric,
i.e. $v(x) = v(|x|)\geq 0$, and of class $L^1({\mathbb R}^3)$ with
compact support. We fix $R>0$ such that $\supp v \subset B(0,R)$.
\end{assumption}

Under Assumption~\ref{SF_assump:v} the potential $v$ has a finite scattering length $a>0$. 
For the convenience of the reader, we recall the definition of the scattering length in Appendix~\ref{scattering}.
The measure of diluteness is the (dimensionless) quantity $\rho a^3$, and we will study the limit $\rho a^3 \rightarrow 0$.
The length scale $L$ of the box is tied to the density $\rho$ in such a way that
\begin{align}
L = C_L (\rho a^3)^{-\delta} \frac{1}{\sqrt{\rho a}},
\end{align}
for some fixed constant $C_L>0$ and for $0<\delta$. The quantity $(\rho a)^{-1/2}$ has the unit of a length and is the Gross-Pitaevskii length scale mentioned above. Therefore, we see that we consider $L$'s much longer than the Gross-Pitaevskii length scale. For results about condensation on length scales shorter than $(\rho a)^{-1/2}$  see \cite{Rou} and references therein.

\begin{thm}\label{thm:SF_BECScales}
Suppose $v$ satisfies Assumption~\ref{SF_assump:v}.
Let $\epsilon \in [0,\frac{1}{2}]$ and $C_0 >0$.
Suppose that $\Psi \in \otimes_s^N L^2(\Omega)$ is a normalized state with
\begin{align}
\langle \Psi, H \Psi \rangle \leq 4\pi a \rho N + C_0  a \rho (\rho a^3)^{\frac{1}{2} - \epsilon} N.
\end{align}
Then,
\begin{align}
\frac{\langle \Psi, n_{+} \Psi\rangle}{N} \leq C \rho a L^2 (\rho a^3)^{\frac{1}{2} - \epsilon}.
\end{align}
In particular, if $2\delta + \epsilon < \frac{1}{2}$, then we have complete condensation in the sense of \eqref{eq:SF_complete}.
\end{thm}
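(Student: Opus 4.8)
\emph{Strategy.}
Everything reduces to a single coercive lower bound. I claim there are constants $c,C>0$, depending only on $v$, such that for all sufficiently small $\rho a^3$ and all $L$ in the range considered,
\begin{align}\label{eq:SF_coercive}
H \;\geq\; 4\pi a\rho N \;+\; c\sum_{j=1}^N(-\Delper_j) \;-\; C\, a\rho(\rho a^3)^{1/2} N
\end{align}
as an operator on $\otimes_s^N L^2(\Omega)$. Granting \eqref{eq:SF_coercive} the theorem is immediate: since $P_\Omega$ projects onto the kernel of $-\Delper$ and the lowest non-zero eigenvalue of $-\Delper$ on $\Omega$ equals $(2\pi/L)^2$, one has $-\Delper\geq (2\pi/L)^2 Q_\Omega$, hence $\sum_j(-\Delper_j)\geq (2\pi/L)^2 n_+$. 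Taking the expectation of \eqref{eq:SF_coercive} in $\Psi$, inserting the energy hypothesis, and using $\epsilon\geq 0$, $\rho a^3<1$, we get
\begin{align}
c\,(2\pi/L)^2\,\langle\Psi,n_+\Psi\rangle \;\leq\; C_0\, a\rho(\rho a^3)^{1/2-\epsilon}N + C\, a\rho(\rho a^3)^{1/2}N \;\leq\; (C_0+C)\, a\rho(\rho a^3)^{1/2-\epsilon}N,
\end{align}
that is, $\langle\Psi,n_+\Psi\rangle\leq C'\rho a L^2(\rho a^3)^{1/2-\epsilon}N$ with $C'=(C_0+C)/(c(2\pi)^2)$. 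Finally, substituting $L^2=C_L^2(\rho a^3)^{-2\delta}(\rho a)^{-1}$ gives $\rho a L^2(\rho a^3)^{1/2-\epsilon}=C_L^2(\rho a^3)^{1/2-\epsilon-2\delta}$, which tends to $0$ as $\rho a^3\to0$ exactly when $2\delta+\epsilon<\tfrac12$; in that case $\langle\Psi,n_+\Psi\rangle/N\to0$, equivalently $\langle\Psi,n_0\Psi\rangle/N\to1$, which is \eqref{eq:SF_complete}.

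\emph{Proof of \eqref{eq:SF_coercive}.}
Here I would follow, and slightly sharpen, the lower-bound analysis of \cite{FS,BFS}, written in second quantisation in the plane-wave basis $\{L^{-3/2}e^{ipx}\}_{p\in(2\pi/L)\mathbb Z^3}$. \textbf{(i)} Split off a portion of the kinetic energy and apply a Dyson-type lemma to replace the short-range potential $v$ by an integrable effective potential $g$ with $\hat g(0)=8\pi a$ up to lower-order corrections, at a controlled kinetic cost; in the second-quantised formulation of \cite{FS} this renormalisation uses the high-momentum part of the kinetic energy and recovers the full scattering length, so that the leading term $4\pi a\rho N$ is not degraded. \textbf{(ii)} Extract the condensate by the unitary excitation map of Lewin--Nam--Serfaty--Solovej, obtaining an excitation Hamiltonian on the non-condensate Fock space which, up to errors of size $\lesssim a\rho(\rho a^3)^{1/2}N$, equals $\sum_j(-\Delper_j)$ plus a quadratic term in the $a_p,a_p^{*}$ ($p\neq0$) built from $\rho$ and $\hat g$, plus cubic and quartic terms. \textbf{(iii)} Diagonalise the quadratic part by a (generalised) Bogolubov transformation; this produces the non-negative Lee--Huang--Yang constant, which we simply discard, and — what matters here — it leaves a definite fraction of the non-condensate kinetic energy intact. \textbf{(iv)} Absorb the cubic and quartic terms into this surviving kinetic energy together with errors of order $a\rho(\rho a^3)^{1/2}N$, using the three- and four-body estimates of \cite{FS}. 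Collecting the bounds yields \eqref{eq:SF_coercive}.

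\emph{Main obstacle.}
The only real work lies in step \textbf{(iv)}, and in carrying out \textbf{(ii)}--\textbf{(iii)} \emph{relative to} the non-condensate kinetic energy: one must control the cubic term — the genuinely delicate estimate in the analysis of dilute Bose gases — while retaining a definite fraction of $\sum_j(-\Delper_j)$ on the right-hand side rather than spending all of it. Two features make this lighter than in \cite{FS}. First, only the \emph{order} $(\rho a^3)^{1/2}$ of the error is needed, not its sharp value, so the parts of \cite{FS} producing the optimal Lee--Huang--Yang constant can be bypassed. Second, the box is much larger than the Gross--Pitaevskii scale, which is harmless: the relevant estimates of \cite{FS} are uniform in $L$ in this regime, with finite-size corrections vanishing as $\rho a^3\to0$. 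The content of the present note is thus the observation that the machinery of \cite{FS,BFS} already delivers the coercive bound \eqref{eq:SF_coercive}, from which Bose--Einstein condensation on every length scale with $2\delta+\epsilon<\tfrac12$ follows by the elementary argument above.
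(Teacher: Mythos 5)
Your reduction step is sound and is essentially the paper's own: the spectral gap of $-\Delper$ above the constant function gives $-\Delper\geq 4\pi^2L^{-2}Q_{\Omega}$, so once one has a lower bound matching $4\pi a\rho N$ up to an error $O(a\rho(\rho a^3)^{1/2}N)$ \emph{while keeping the gap term}, the bound on $\langle\Psi,n_+\Psi\rangle$ and the condition $2\delta+\epsilon<\tfrac12$ follow by exactly the arithmetic you give. The paper phrases this as the bound \eqref{eq:SF_LowerBound} for $\widetilde H$, i.e.\ for $H$ with the gap $2\pi^2L^{-2}Q_{\Omega,j}$ already subtracted; your variant, which retains a fraction $c\sum_j(-\Delper_j)$ of the full kinetic energy, is stronger than what is needed but is used in the same way.

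The genuine gap is in the proof of the coercive bound itself, which is the entire content of the theorem. You assert it by pointing to \cite{FS,BFS} and sketching a route (Dyson-type lemma, excitation map, Bogolubov diagonalisation, absorption of the cubic and quartic terms) carried out directly on the large box, with the claim that the relevant estimates are \enquote{uniform in $L$} with vanishing finite-size corrections. They are not: the published results of \cite{FS,BFS} are energy statements (ultimately in the thermodynamic limit), not operator bounds retaining a kinetic term, and their proofs do not run a Bogolubov analysis on the large box --- they first localize to boxes of Gross-Pitaevskii size $\ell\sim(\rmu a)^{-1/2}$, precisely because the Bogolubov and particle-number errors (for instance those coming from commutator bounds of the type $[b_k,b_k^\dagger]\leq n$) grow with the number of particles per box and are uncontrollable when the box is much larger than $(\rho a)^{-1/2}$. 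What is missing from your proposal is exactly the mechanism the paper supplies: reserving the gap before any manipulation; the grand-canonical reformulation with a chemical potential $\rmu$ (Theorem~\ref{thm:SF_LHY-Background}) so that the particle number in the small boxes need not be tracked; the sliding localization of kinetic and potential energy to boxes of size \eqref{eq:SF_def_ell} (Lemmas~\ref{lem:LocPotEn} and \ref{lem:LocKinEn}); and a lower bound on the localized Hamiltonian uniform in the particle number (Theorem~\ref{thm:SF_LHY-Box}), obtained by splitting the particles into groups of size $\sim\rmu\ell^3$. Moreover, your step (iv) --- keeping a definite fraction of the kinetic energy through the cubic-term estimate on a box much longer than the GP scale --- is not available in \cite{FS} as a black box, and the excitation-map route you invoke is the one adapted to the GP regime (extended beyond it only to $\kappa<1/43$ in \cite{ABS}), so it does not cover the scales $L=C_L(\rho a^3)^{-\delta}(\rho a)^{-1/2}$ at issue. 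As written, the proposal reduces the theorem to an unproved claim rather than proving it.
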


\begin{rem}
On a box of side-length $L$ there is a gap in the spectrum of the kinetic energy above the ground state of order $L^{-2}$. We therefore, get the easy lower bound
\begin{align}
\langle \Psi, H \Psi \rangle \geq \langle \Psi, \widetilde{H} \Psi \rangle + 2\pi^2 L^{-2} \langle \Psi, n_{+} \Psi \rangle,
\end{align}
where
\begin{align}
\widetilde{H} := \sum_{j=1}^N \left(- \Delper_j - 2\pi^2 L^{-2} Q_{\Omega,j}\right) + \sum_{j < k } v^{\rm per}(x_j-x_k).
\end{align}
Therefore, Theorem~\ref{thm:SF_BECScales} follows upon proving the lower bound,
\begin{align}\label{eq:SF_LowerBound}
\widetilde{H} \geq 4\pi a \rho N + C_0  a \rho (\rho a^3)^{\frac{1}{2}} N.
\end{align}
The purpose of this paper is to show how 
this lower bound follows relatively easily using a localization argument from \cite{BFS,BS,FS} (see also \cite{LS1,LS2} for earlier versions of a similar argument), thereby proving condensation in the sense of Theorem~\ref{thm:SF_BECScales}.
This proof is very general, in particular, it is uniform in the $L^1$-norm of the potential, so it allows for hard core potentials. Also, the assumption on compact support can be relaxed. We will not pursue these generalizations---they follow fairly easily from the considerations in this paper---see \cite{BFS} for details.

Notice how this strategy depends strongly on the existence of a large enough gap in the kinetic energy. Therefore, it is clear that to prove BEC on much longer length scales other ideas will be needed. In particular, this technique does not work in the thermodynamic limit where one considers the limit $L\rightarrow \infty$ for fixed (small) density $\rho = N/L^3$.
\end{rem}

\begin{rem}
For large boxes\footnote{More precisely, the statement is valid in the thermodynamic limit.} we know that the energy per particle in the ground state has the form
\begin{align}\label{eq:SF_LHY}
E(N,L)/N = 4 \pi a \rho \left(1 + \frac{128}{15 \sqrt{\pi}} \sqrt{\rho a^3}\right) + {\mathcal O}(a \rho (\rho a^3)^{\frac{1}{2}+\eta}),
\end{align}
when $\rho a^3$ is small and for some $\eta>0$.
Here the leading order term was proved in \cite{dyson} (upper bound), \cite{LY} (lower bound), and the correction term---called the Lee-Huang-Yang-term \cite{LHY}---was rigorously established recently in \cite{YY} (upper bound)  and \cite{FS} (lower bound).

From the discussion in the previous remark it is clear that one can improve Theorem~\ref{thm:SF_BECScales} by going to the next order in the energy. More precisely, one can improve \eqref{eq:SF_LowerBound} by including the Lee-Huang-Yang term and calculating the energy per perticle to precision ${\mathcal O}(a \rho (\rho a^3)^{\frac{1}{2}+\eta})$.
This would allow us to establish BEC for states with sufficiently low energy on length scales a bit (depending on $\eta$) longer than $(\rho a^3)^{-\frac{1}{4}} \frac{1}{\sqrt{\rho a}}$.
For reasons of clarity of exposition we refrain from carrying out this more precise analysis in this short paper: The gain $\eta$ in \eqref{eq:SF_LHY} as proved in \cite{FS}
is not very large, and the additional necessary analysis is somewhat more lengthy than what is possible here. 
However, it should be clear from this paper, how to adapt  \cite{FS} to obtain the improved result.
\end{rem}

\begin{rem}
In the litterature (c.f. \cite{ABS,BBCS0,BBCS2}) the Hamiltonian is often written in another (equivalent) scaling, so it seems worthwhile to translate between the two. 
Consider the periodic box $[0,\lambda]^3$. Typically $\lambda$ is chosen to be equal to unity, but we keep it here for dimensional reasons.
Define $H^{\rm GP}(N)$ to be the operator
\begin{align}
H^{\rm GP}(N) :=  \sum_{j=1}^N - \Delper_j + \sum_{j < k } N^{2-2\kappa} v^{\rm per}(N^{1-\kappa}(y_j-y_k)),
\end{align}
where $\Delper$ is the Laplacian on the $\lambda$-torus.
By the scaling $x=N^{1-\kappa} y$, we see that
\begin{align}
H^{\rm GP}(N) \sim N^{2-2\kappa} H(N,L),
\end{align}
with $\sim$ denoting unitary equivalence and for $L= N^{1-\kappa} \lambda$.
We get the density after scaling as
$\rho = N L^{-3} = N^{3\kappa-2} \lambda^{-3}$, so in terms of the density we find
\begin{align}
L = C(\lambda)(\rho a^3)^{\frac{-\kappa}{4-6\kappa}} \frac{1}{\sqrt{\rho a}} ,
\end{align}
with $C(\lambda)$ being the constant $\left( \frac{a}{\lambda} \right)^{\frac{1}{2} + \frac{3\kappa}{4-6\kappa}}$.
Our results on BEC in the range $\delta \in (0,\frac{1}{4})$ when $\epsilon =0$, therefore translate into BEC for the low energy states of the operator $H^{\rm GP}(N)$, when $0<\kappa < 2/5$.
In \cite{ABS} a different method is used that gives BEC for $\kappa \in (0,\frac{1}{43})$.
\end{rem}

\section{Energy in small boxes}
We study a localized problem depending on a parameter $\rmu$. 
The following Section~\ref{localization} will show how the proof of Theorem~\ref{thm:SF_BECScales} reduces to this localized problem.
In the end $\rmu$ will be chosen to be the density $\rho = N/L^3$ in the original box, but we use a different notation to avoid confusion with the particle density of the localized problem.
The analysis in the present section follows closely that of \cite[Appendix B]{FS}, but with some simplifications similar to \cite{BFS}.
The box is $\Lambda = (-\frac{\ell}{2}, \frac{\ell}{2})^3$ of size
\begin{align}\label{eq:SF_def_ell}
\ell := K^{-1} \frac{1}{\sqrt{\rmu a}},
\end{align}
where $K>1$ will be chosen sufficiently large at the end, in order for the kinetic energy gap in the box to dominate various error terms (see \eqref{eq:SF_Gap} below).

We choose and fix a localization function $\chi \in C^{\infty}({\mathbb R}^3)$ with $\supp \chi \subset [-1/2,1/2]^3$.
We choose
$\chi$ even and satisfying
\begin{align}\label{eq:SF_chinormalization}
0 \leq \chi, \qquad \int \chi^2(x)\,dx = 1.
\end{align}
We will also use the notation 
\begin{align}
\chi_{\Lambda}(x) := \chi(x/\ell),
\end{align}
and
\begin{align}\label{eq:SF_3.5}
W(x) := \frac{v(x)}{\chi*\chi(x/\ell)}.
\end{align}
Since  $v$ has support in $B(0,R)$, we see that $W$ is well-defined if $\rmu a^3$ is sufficiently small.
Clearly $W$ depends on $\ell$ and thus $\rho_\mu$, but we will not reflect this in our notation.

Define furthermore, as operators on $L^2(\Lambda)$,
\begin{align}
P := \ell^{-3} | 1 \rangle \langle 1|, \qquad Q  := \one - P,
\end{align}
i.e. $P $ is the orthogonal projection in $L^2(\Lambda)$ onto the constant functions and $Q $ is the projection to the  orthogonal complement.

The localized Hamiltonian ${\mathcal H}_{\Lambda}$ acts on the symmetric Fock space ${\mathcal F}_s (L^2(\Lambda))$. It preserves particle number and is given as
\begin{align}\label{eq:SF_Def_HB}
({\mathcal H}_{\Lambda}(\rmu))_{M} :=
 \sum_{i=1}^M \mathcal{T}^{(i)} -
\rmu \sum_{i=1}^M \int w_{1}(x_i,y)\,dy + \sum_{1\leq i<j\leq M} w(x_i,x_j),
\end{align}
on the $M$-particle sector.
We will see in Section~\ref{localization} how ${\mathcal H}_{\Lambda}(\rmu)$ is related to the original $\widetilde{H}$ after localization through a sliding procedure.
Here
\begin{align}
\mathcal{T}  := Q\left[
\chi
\left( -\Delta- s^{-2}\ell^{-2} \right)_{+}
\chi
+
b \ell^{-2}   \right] Q,
\end{align}
with $s, b>0$ fixed constants, and
\begin{align}\label{eq:SF_w12}
w(x,y) &= \chi(x/\ell) W(x-y) \chi(y/\ell), \nonumber \\
W_1(x) &= W(x) (1-\omega(x)) = \frac{g(x)}{\chi*\chi(x/\ell)}, \nonumber \\
w_1(x,y) &= w(x,y) (1-\omega(x-y)) = \chi(x/\ell) g(x-y) \chi(y/\ell), 
\end{align}
where $\omega$ and $g$ are related to the scattering length and defined in \eqref{eq:SF_Scattering2}.
For later use, we also define
\begin{align}\label{eq:SF_w2}
w_2(x,y) &= w(x,y) (1-\omega^2(x-y)) = w_1(x,y) (1 + \omega(x-y)).
\end{align}
Notice the identities
\begin{align}\label{eq:SF_integrals}
\iint w_1(x,y)\,dx dy &= 8 \pi \ell^3 a, \nonumber \\
\iint w_2(x,y)\,dx dy &= \ell^3 \left( 8 \pi a + \int g \omega(x) \,dx\right).
\end{align}

We will prove the following lower bound on ${\mathcal H}_{\Lambda}(\rmu)$. Notice that this is a lower bound on the entire operator on Fock space, i.e. independent of particle number. This is useful because the term with $\rmu$ plays the role of a chemical potential and therefore effectively determines a preferred number of particles in the box.

\begin{thm}\label{thm:SF_LHY-Box}
Suppose $v$ satisfies Assumption~\ref{SF_assump:v} and let the localization function $\chi \in C^{\infty}_0((-\frac{1}{2}, \frac{1}{2})^3)$ be given.
Suppose that $K$ in \eqref{eq:SF_def_ell} is chosen sufficiently large (depending only on $\chi,b,s$ and the radius $R$ of  $\supp v$).
Then there exists a constant $C_0 >0$, such that for sufficiently small values of  $\rmu a^3$, we have that 
\begin{align}\label{eq:SF_EnergyBoxRes1}
{\mathcal H}_{\Lambda}(\rmu) \geq -4\pi \rmu^2 a \ell^3 -  C_0 \rmu^2 a  \ell^3 (\rmu a^3)^{\frac{1}{2}}.
\end{align}
\end{thm}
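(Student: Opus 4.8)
The plan is to prove \eqref{eq:SF_EnergyBoxRes1} in second quantized language on $\mathcal F_s(L^2(\Lambda))$: renormalize the interaction by the scattering equation, extract the condensate mode, reduce to a quadratic Bogolubov Hamiltonian, and absorb all error terms into the large kinetic gap built into $\mathcal T$. Concretely, let $a_0,a_0^*$ be the creation/annihilation operators of the normalized constant mode $\ell^{-3/2}\one\in L^2(\Lambda)$ and $a_p,a_p^*$ ($p\neq 0$) those of an orthonormal basis of $QL^2(\Lambda)$, so that $n_0=a_0^*a_0$, $n_+=\mathrm{d}\Gamma(Q)=\sum_{p\neq0}a_p^*a_p$ and $\mathrm{d}\Gamma(\mathcal T)$ only involves excited modes. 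Using the zero-energy scattering equation satisfied by $\omega$ (so that $-\Delta\omega=\tfrac12 g$ modulo the box cut-off, with $g$ as in \eqref{eq:SF_w12}) I would trade the bare interaction $\tfrac12\sum_{i\neq j}w(x_i,x_j)$ for the softer $\tfrac12\sum_{i\neq j}w_1(x_i,x_j)$ at the price of a modification of the one-body part and of terms that become lower order once the condensate is extracted --- this is the usual Dyson--scattering renormalization, and it is exactly why $w_1$, $w_2$ and the $(-\Delta-s^{-2}\ell^{-2})_+$ form of $\mathcal T$ occur. After this step the purely condensed contribution is $\tfrac12\ell^{-6}\bigl(\iint w_1\bigr)\,n_0(n_0-1)=4\pi a\ell^{-3}n_0^2+\dots$ by \eqref{eq:SF_integrals}, while the $\rmu$-term contributes $-8\pi a\rmu\, n_0$ to leading order (its excited part being a small one-body operator on $QL^2(\Lambda)$), and completing the square,
\begin{align}
4\pi a\ell^{-3}n_0^2-8\pi a\rmu\, n_0=4\pi a\ell^{-3}\bigl(n_0-\rmu\ell^3\bigr)^2-4\pi\rmu^2a\ell^3\;\geq\;-4\pi\rmu^2a\ell^3,
\end{align}
produces the leading term in \eqref{eq:SF_EnergyBoxRes1} and leaves the non-negative quantity $4\pi a\ell^{-3}(n_0-\rmu\ell^3)^2$ available to absorb errors.

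It then remains to handle the terms with one, two, three or four excited legs. The four-excited term $\tfrac12\sum_{p,q,r,s\neq0}\langle pq|w|rs\rangle a_p^*a_q^*a_ra_s$ equals, on each $M$-particle sector, $\tfrac12\sum_{i\neq j}Q_iQ_j\,w(x_i,x_j)\,Q_iQ_j$ up to a bounded one-body normal-ordering correction; since $w\geq0$ pointwise this is non-negative and is discarded for the lower bound. The delicate piece is the two-excited (pairing) part: together with $\mathrm{d}\Gamma(\mathcal T)$ and the diagonal condensate--excited terms, and after replacing $a_0,a_0^*$ by $\sqrt{\rmu\ell^3}$ (the substitution error being controlled by the square term above and by $n_+$), it forms a quadratic Bogolubov Hamiltonian $\sum_{p\neq0}\bigl[A_p\,a_p^*a_p+\tfrac12 B_p(a_p^*a_{-p}^*+a_pa_{-p})\bigr]$ with $A_p\approx p^2+\rmu\hat g_p$, $B_p\approx\rmu\hat g_p$, $\hat g_0=8\pi a$. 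Diagonalizing gives ground state energy $-\tfrac12\sum_{p\neq0}\bigl(A_p-\sqrt{A_p^2-B_p^2}\bigr)\geq-C\sum_{p\neq0}B_p^2/p^2$; as $B_p$ is effectively supported on $|p|\lesssim\sqrt{\rmu a}$ while momenta in $\Lambda$ are quantized in units $2\pi/\ell$, only $O((\ell\sqrt{\rmu a})^3)=O(K^{-3})$ modes contribute, each of size $O(\rmu a)$, so this Bogolubov correction is $O(\rmu^2a\ell^3(\rmu a^3)^{1/2})$, which is exactly the error term in \eqref{eq:SF_EnergyBoxRes1}.

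The cubic (one-condensed--three-excited) terms, the linear terms from the excited part of the $\rmu$-term, and the small negative one-body term $-\rmu\sum_{p\neq0}(\cdots)a_p^*a_p$ are all controlled by Cauchy--Schwarz against the kinetic energy: on $QL^2(\Lambda)$ one has $\mathcal T\geq cK^2\ell^{-2}Q$, i.e. $\mathrm{d}\Gamma(\mathcal T)\geq cK^2\ell^{-2}n_+$ (this is the gap \eqref{eq:SF_Gap}), and taking $K$ large absorbs each such term into $\tfrac14\mathrm{d}\Gamma(\mathcal T)$ at the cost of an error of order $\rmu^2a\ell^3(\rmu a^3)^{1/2}$. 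The a priori bounds on $n_+$ and $n_0-\rmu\ell^3$ required to make these estimates quantitative come from first establishing a crude lower bound $\mathcal H_\Lambda(\rmu)\geq-C\rmu^2a\ell^3$ (drop all positive terms and use the gap), which forces $\langle n_+\rangle\lesssim\rmu^2a\ell^5/K^2$ and $\langle(n_0-\rmu\ell^3)^2\rangle\lesssim\rmu^2\ell^6$ on low-energy states, and then bootstrapping.

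I expect the main obstacle to be this bootstrap together with the bookkeeping it entails: there is no a priori condensation estimate inside the box, so the substitution $a_0\to\sqrt{\rmu\ell^3}$, the treatment of the cubic terms and the truncation of high momenta must all be carried out with errors that close up into the sharp constant $-4\pi\rmu^2a\ell^3$, and the essential device is the freedom to take $K$ large so that the kinetic gap $\sim K^2\ell^{-2}$ dwarfs the ambient energy scale $\rmu a$. A secondary technical point is the loss of translation invariance caused by $\chi$: momentum is not conserved, so the one-excited and various cross terms do not vanish outright but must be estimated using that $\chi$ varies on scale $\ell$ while $v$ is supported in $B(0,R)$ with $R/\ell\sim KR\sqrt{\rmu a}\to0$, making these collision-localization errors negligible.
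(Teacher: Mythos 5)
Your overall strategy (scattering renormalization, dropping the positive quartic term, a Bogolubov-type quadratic estimate, and absorbing errors into the kinetic gap by taking $K$ large) is the same as the paper's, but two of your steps fail as stated. First, the energy bookkeeping around the Bogolubov diagonalization is wrong. With the renormalized pairing amplitude $B_p\approx\rmu\hat g(p)$ one has, by \eqref{es:scatteringFourier} and Plancherel, $\sum_{p\neq0}B_p^2/p^2\approx \ell^3\rmu^2(2\pi)^{-3}\int \hat g(p)^2 p^{-2}\,dp \sim \rmu^2\ell^3\int g\omega$, and $\int g\omega$ is in general of the same order as $a$ (for hard-core-like potentials $\int g\omega\approx 8\pi a$); the claim that only $O(K^{-3})$ modes with $|p|\lesssim\sqrt{\rmu a}$ contribute is false, since $\hat g$ is spread out to $|p|\sim R^{-1}\gg \ell^{-1}$. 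Hence your ``Bogolubov correction'' is of the \emph{same} order as the main term $-4\pi\rmu^2 a\ell^3$, not $O(\rmu^2 a\ell^3(\rmu a^3)^{1/2})$, and with the condensate--condensate term booked as $\tfrac12\ell^{-6}(\iint w_1)\,n_0(n_0-1)$ your constant does not come out to $4\pi a$. In the paper this is exactly why the algebraic decomposition (Lemma~\ref{lem:SF_potsplit}, Lemma~\ref{lem:SF_appinteractionestimate}) produces $w_2$, not $w_1$, in the condensate--condensate term: the extra $\tfrac{n^2}{2\ell^3}\int g\omega$ coming from $\iint w_2=\ell^3(8\pi a+\int g\omega)$ cancels the $-\tfrac{n(n+1)}{2\ell^3}\int g\omega$ produced by the quadratic estimate (Lemma~\ref{lem:SF_Bog-Calc}), leaving precisely $\tfrac{n^2}{2\ell^3}\,8\pi a$ plus admissible errors. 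Your proposal is missing this cancellation, which is the heart of the renormalization at this precision and of the uniformity in $\int v$.

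Second, your a priori particle-number control is not sound. The theorem is an operator bound on all of Fock space, and the crude bound ``drop all positive terms and use the gap'' fails: discarding the repulsion leaves $-\rmu\sum_i\int w_1(x_i,y)\,dy\geq -C a\rmu M$ on the $M$-particle sector, which is unbounded below in $M$ (the gap only helps on excited particles, not on the condensate), so $\mathcal H_\Lambda(\rmu)\geq -C\rmu^2 a\ell^3$ cannot be obtained this way; moreover the expectation bounds on $\langle n_+\rangle$ and $\langle(n_0-\rmu\ell^3)^2\rangle$ you would extract do not control the higher moments (e.g. terms like $a\,n_0\ell^{-3}n_+$, the $c$-number substitution error, and the commutator bound $[b_k,b_k^\dagger]\leq n$) that your cubic and quadratic estimates require, so the ``bootstrap'' is where the real work would be hidden. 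The paper avoids all of this with a much simpler device you did not use: since $w\geq0$, split the $M$ particles into groups of size about $\Xi\rmu\ell^3$, discard the interaction between groups, show that any full group has nonnegative energy because the completed square $2\pi a\ell^{-3}(\rmu\ell^3-n)^2$ dominates when $n\geq3\rmu\ell^3$, and run the whole estimate only for $n\leq(\Xi+1)\rmu\ell^3$, which makes the gap condition \eqref{eq:SF_Gap} and all error bounds deterministic in $n$. A smaller inaccuracy: the gap of $\mathcal T$ on $Q L^2(\Lambda)$ is $b\ell^{-2}=bK^2\rmu a$, not $cK^2\ell^{-2}$; the gain in $K$ comes from comparing $\ell^{-2}$ with $\rmu a$, not from an extra factor $K^2$ over $\ell^{-2}$.
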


Theorem~\ref{thm:SF_LHY-Box} is actually the same as \cite[Theorem 6.1]{BFS}. For completeness, we give the main steps of the proof below.

\begin{proof}[Proof of Theorem~\ref{thm:SF_LHY-Box}]
We start by obtaining a control on the number of particles. Notice that ${\mathcal H}_{\Lambda}(\rmu)$ preserves particle number, so it suffices to establish the lower bound \eqref{eq:SF_EnergyBoxRes1} for $\langle \Psi, ({\mathcal H}_{\Lambda}(\rmu))_{M} \Psi \rangle$ for $\Psi \in \otimes_s^M L^2(\Lambda)$ for some arbitrary particle number $M$.
Since, for $\rmu a^3$ sufficiently small, $\rmu \ell^3 = K^{-3} (\rmu a^3)^{-1/2} \geq 1$, we can divide the $M$ particles in groups with the order of $\rmu \ell^3$ particles, i.e. given some $\Xi \geq 3$, we write
\begin{align}
\{1,\ldots,M\} = \cup_{j=1}^{\xi} S_j,
\end{align}
with $S_j \cap S_k = \emptyset$ for $j\neq k$ and 
\begin{align}\label{eq:SF_particleNumbers}
|S_{\xi}| &\leq (\Xi+1) \rmu \ell^3, \nonumber \\
|S_j| &\in [\Xi \rmu \ell^3,  (\Xi+1) \rmu \ell^3], \qquad \text{ for } j < \xi.
\end{align}

Now we remove the positive potentials $w(x_i,x_j)$ from $({\mathcal H}_{\Lambda}(\rmu))_{M}$ except when $i,j$ are in the same group $S_k$. i.e. discard the interaction between particles in different groups.
In this way, we obtain the lower bound,
\begin{align}\label{eq:SF_SumOverGroups}
\langle \Psi, ({\mathcal H}_{\Lambda}(\rmu))_{M} \Psi \rangle \geq
\sum_{j=1}^{\xi} \inf \Spec ({\mathcal H}_{\Lambda}(\rmu))_{|S_j|},
\end{align}
where the particle numbers $|S_j|$ satisfy \eqref{eq:SF_particleNumbers}.
Therefore, from now on we can reduce the analysis to $ ({\mathcal H}_{\Lambda}(\rmu))_{n}$ with $n \leq (\Xi+1) \rmu \ell^3$.

Define the operators $n_0, n_{+}$ on ${\mathcal F}_s (L^2(\Lambda))$, given by
\begin{align}
n_0 = \sum_{i} P_i,\qquad \qquad n_{+} = \sum_{i} Q_i.
\end{align}

An important step in the proof is the algebraic identity given in \eqref{eq:SF_potsplit} below, which isolates a positive term ${\mathcal Q}_4^{\rm ren}$ in the potential that can be discarded for a lower bound.

\begin{lem}[Potential energy decomposition]\label{lem:SF_potsplit}
We have
\begin{equation} \label{eq:SF_potsplit}
-\rmu \sum_{i=1}^n \int w_1(x_i,y)\,dy+
\frac{1}{2} \sum_{i\neq j}  w(x_i, x_j)
= {\mathcal Q}_0^{\rm ren}+{\mathcal Q}_1^{\rm ren}
+{\mathcal Q}_2^{\rm ren}+{\mathcal Q}_3^{\rm ren} + {\mathcal Q}_4^{\rm ren},
\end{equation}
where
\begin{align}\allowdisplaybreaks[4]
{\mathcal Q}_4^{\rm ren}:=&\,
\frac{1}{2} \sum_{i\neq j} \Big[ Q_i Q_j + (P_i P_j + P_i Q_j + Q_i P_j)\omega(x_i-x_j) \Big] w(x_i,x_j) \nonumber \\
&\,\qquad \qquad \times
\Big[ Q_j Q_i + \omega(x_i-x_j) (P_j P_i + P_j Q_i + Q_j P_i)\Big],\label{eq:SF_DefQ4}\\
{\mathcal Q}_3^{\rm ren}:=&\,
\sum_{i\neq j} P_i Q_j w_1(x_i,x_j) Q_j Q_i + h.c. \label{eq:SF_DefQ3} \\
{\mathcal Q}_2^{\rm ren}:=&\, \sum_{i\neq j} P_i Q_j w_2(x_i,x_j) P_j Q_i
+ \sum_{i\neq j} P_i Q_j w_2(x_i,x_j) Q_j P_i \nonumber \\&\,- \rmu \sum_{i=1}^n Q_i \int w_1(x_i,y)\,dy Q_i 
+ \frac{1}{2}\sum_{i\neq j} (P_i P_j w_1(x_i,x_j) Q_j Q_i + h.c.),\label{eq:SF_DefQ2}\\
{\mathcal Q}_1^{\rm ren}:=&\, \sum_{i,j}P_jQ_iw_2(x_i,x_j)P_iP_j-\rmu
  \sum_{i} Q_i \int w_1(x_i,y)\,dy P_i +h.c.
\label{eq:SF_DefQ1} \\
{\mathcal Q}_0^{\rm ren}:=&\,
\frac{1}{2} \sum_{i\neq j} P_i P_j w_2(x_i,x_j) P_j P_i - \rmu \sum_i P_i \int w_1(x_i,y)\,dy P_i\label{eq:SF_DefQ0}
\end{align}
\end{lem}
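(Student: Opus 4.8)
Here $\eqref{eq:SF_potsplit}$ is a purely \emph{algebraic} identity on Fock space: apart from the definitions \eqref{eq:SF_w12}--\eqref{eq:SF_w2}, used only in the equivalent form $w(x_i,x_j)\,\omega(x_i-x_j)=w(x_i,x_j)-w_1(x_i,x_j)$ and $w(x_i,x_j)\,\omega^2(x_i-x_j)=w(x_i,x_j)-w_2(x_i,x_j)$, no scattering theory enters. The plan is the standard Bogolubov bookkeeping: insert the resolution of the identity $\one_i=P_i+Q_i$ on every particle coordinate, both in the two-body term $\tfrac12\sum_{i\neq j}w(x_i,x_j)$ and in the one-body term $-\rmu\sum_i\int w_1(x_i,y)\,dy$, and then sort the resulting monomials according to how many factors $Q$ they carry. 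The degree-$4$ monomials (four factors $Q$), after absorbing the $\omega$-dependent remainders, assemble into ${\mathcal Q}_4^{\rm ren}$, and the monomials of degree $0,1,2,3$ into ${\mathcal Q}_0^{\rm ren},\dots,{\mathcal Q}_3^{\rm ren}$.

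First I would bring ${\mathcal Q}_4^{\rm ren}$ into closed form. Since $P_iP_j+P_iQ_j+Q_iP_j=\one-Q_iQ_j$, the two bracketed operators in \eqref{eq:SF_DefQ4} are $L_{ij}:=Q_iQ_j+(\one-Q_iQ_j)\,\omega(x_i-x_j)$ and its adjoint $L_{ij}^{*}$, so that
\begin{align}
{\mathcal Q}_4^{\rm ren}=\tfrac12\sum_{i\neq j}L_{ij}\,w(x_i,x_j)\,L_{ij}^{*}\geq 0,
\end{align}
which is precisely the positivity that makes the decomposition useful (recall $w\geq 0$). Writing $w_{ij}:=w(x_i,x_j)$ and $w_{1,ij},w_{2,ij}$ accordingly, multiplying out $L_{ij}w_{ij}L_{ij}^{*}$, commuting the multiplication operators $w,\omega$ past one another, and using the two identities above to eliminate every $\omega$, a short computation gives
\begin{align}
L_{ij}\,w_{ij}\,L_{ij}^{*}=(w_{ij}-w_{2,ij})+Q_iQ_j(w_{2,ij}-w_{1,ij})+(w_{2,ij}-w_{1,ij})Q_iQ_j+Q_iQ_j(2w_{1,ij}-w_{2,ij})Q_iQ_j;
\end{align}
the special case $\omega\equiv 0$ (where $w_1=w_2=w$ and the right-hand side collapses to $Q_iQ_jw_{ij}Q_jQ_i$) is a convenient check. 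Thus ${\mathcal Q}_4^{\rm ren}$ is expressed through $w,w_1,w_2$ and the projection $Q_iQ_j$ alone, with no $\omega$ left.

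Next I would form $\tfrac12\sum_{i\neq j}w_{ij}-{\mathcal Q}_4^{\rm ren}$, insert $\one_i=P_i+Q_i$ on each index in the result and in $-\rmu\sum_i\int w_1(x_i,y)\,dy=-\rmu\sum_i(P_i+Q_i)\big(\int w_1(x_i,y)\,dy\big)(P_i+Q_i)$, and collect by excitation degree, using $P_iQ_i=0$ to discard vanishing monomials, the symmetry $w_k(x,y)=w_k(y,x)$ ($k=1,2$) together with relabelling $i\leftrightarrow j$ to merge equal summands, and the fact that, since the sums run over ordered pairs, both members of each Hermitian-conjugate pair are present. One then reads off: the degree-$0$ part is $\tfrac12\sum_{i\neq j}P_iP_jw_2P_jP_i$, the $w$-part of ${\mathcal Q}_0^{\rm ren}$; the degree-$1$ part is $\sum_{i,j}P_jQ_iw_2P_iP_j+h.c.$, the $w$-part of ${\mathcal Q}_1^{\rm ren}$; the degree-$2$ part is the four $w$-terms of \eqref{eq:SF_DefQ2}; the degree-$3$ part is $\sum_{i\neq j}P_iQ_jw_1Q_jQ_i+h.c.$, i.e.\ ${\mathcal Q}_3^{\rm ren}$; and the degree-$4$ part vanishes identically because its coefficient is $w_2-2(w_2-w_1)-(2w_1-w_2)=0$. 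Finally the one-body term contributes its $P_iP_i$ piece to ${\mathcal Q}_0^{\rm ren}$, its $Q_iP_i+P_iQ_i$ piece to ${\mathcal Q}_1^{\rm ren}$ (the $-\rmu$ term in \eqref{eq:SF_DefQ1}), and its $Q_iQ_i$ piece to ${\mathcal Q}_2^{\rm ren}$, which completes the match with \eqref{eq:SF_DefQ0}--\eqref{eq:SF_DefQ3}.

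The only real difficulty is bookkeeping hygiene: respecting the ordered-pair conventions, distinguishing genuine Hermitian conjugates from $i\leftrightarrow j$ images when merging summands, and---most importantly---remembering that $\omega(x_i-x_j)$ does \emph{not} commute with $P_i$ or with $Q_j$ individually, so one may not move $\omega$ through the projections; only the specific combinations inside $L_{ij}$ are arranged so that $L_{ij}^{*}$ is again of the same shape. As an extra consistency test one can take the expectation of \eqref{eq:SF_potsplit} in the purely condensed configuration and compare the resulting coefficients with the integral identities \eqref{eq:SF_integrals}.
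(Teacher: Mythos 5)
Your proposal is correct and follows exactly the route the paper indicates: its proof of this lemma is the one-line remark that the identity is purely algebraic, using $P+Q=\one$ and the definitions of $w_1,w_2$, which is precisely the expansion you carry out. Your closed form $L_{ij}\,w_{ij}\,L_{ij}^{*}=(w-w_2)+Q_iQ_j(w_2-w_1)+(w_2-w_1)Q_iQ_j+Q_iQ_j(2w_1-w_2)Q_jQ_i$ and the subsequent sorting by excitation degree check out, including the vanishing of the degree-$4$ remainder.
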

\begin{proof}
The identity \eqref{eq:SF_potsplit} is purely algebraic using that $P+Q=\one$, and the definitions \eqref{eq:SF_w12} and \eqref{eq:SF_w2}.
\end{proof}

\begin{rem}
The identity \eqref{eq:SF_potsplit} is used in \cite{FS} to get the energy to Lee-Huang-Yang precision. 
If one is only aiming to prove Theorem~\ref{thm:SF_LHY-Box} and is willing to let the error bound depend on $\int v$, then one can use the simpler
\begin{align*}
\widetilde{\mathcal Q}_4^{\rm ren}:=&\,
\frac{1}{2} \sum_{i\neq j} \Big[ Q_i Q_j + P_i P_j \omega(x_i-x_j) \Big] w(x_i,x_j) \Big[ Q_j Q_i + \omega(x_i-x_j) P_j P_i \Big].
\end{align*}
This will then, of course, change the other terms in \eqref{eq:SF_potsplit}.

Notice that one of the effects of the identity \eqref{eq:SF_potsplit} is that all terms with $3$ or fewer $Q$'s contain the potentials $w_1$ or $w_2$ instead of the original potential $v$. Integrals of $w_1, w_2$ are immediately controlled in terms of $a = \frac{1}{8\pi} \int g(x)\,dx$, which is possibly much smaller than $\frac{1}{8\pi} \int v(x)\,dx$. Therefore, using ${\mathcal Q}_4^{\rm ren}$ instead of $\widetilde{\mathcal Q}_4^{\rm ren}$ allows to get the uniformity of Theorem~\ref{thm:SF_LHY-Box}.
Therefore, in this article, we follow \cite{BFS,FS} and use the correct ${\mathcal Q}_4^{\rm ren}$-term.
\end{rem}

Using \eqref{eq:SF_potsplit} we can simplify the potential energy significantly. We can use that ${\mathcal Q}_4^{\rm ren}$ is a positive term to discard it. Also, we can use a Cauchy-Schwarz inequality on ${\mathcal Q}_3^{\rm ren}$ to absorb part of ${\mathcal Q}_3^{\rm ren}$ in the positive ${\mathcal Q}_4^{\rm ren}$-term (before discarding it). This will, of course, introduce additional terms with $2$ or fewer $Q$'s in them.
Furthermore, we notice that for all functions $F$,
$P_i F(x_i,y) P_i = \ell^{-3} P_i \int F(x,y) \,dx$, since $P$ projects on the constant function. 
This can be used to simplify several terms.
Finally, one can use a Cauchy-Schwarz inequality on the terms with one $Q$ to estimate these in terms of $2Q$-terms and $0Q$-terms.
This yields the following lemma, which is \cite[Lemma B.2]{FS} (where the details of the proof can also be found).

\begin{lem}[{\cite[Lemma B.2]{FS}}] \label{lem:SF_appinteractionestimate} There is a constant $C>0$ such that we have,
\begin{align}\label{eq:SF_SmallsimpleQs}
  -\rmu \sum_{i=1}^n \int w_1(x,y)\,dy+
   \frac{1}{2} \sum_{i\neq j}  w(x_i, x_j)
  \geq A_0+A_2-Ca (\rmu +n_0 \ell^{-3} )n_+  
\end{align}
where
\begin{align}
  A_0&=\frac{n_0(n_0-1)}{2\ell^6}\iint w_2(x,y)\,dx dy 
  - \left(\rmu \frac{n_0}{\ell^3}+\frac14\left(\rmu
  -\frac{n_0-1}{\ell^3}\right)^2\right)\iint w_1(x,y)\,dx dy
  \label{eq:SF_A0}
\end{align}
and 
\begin{equation}
  A_2= \frac{1}{2}\sum_{i\neq j} P_i P_j w_1(x_i,x_j) Q_j Q_i + h.c.
\end{equation} 
\end{lem}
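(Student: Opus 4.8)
To establish the operator inequality \eqref{eq:SF_SmallsimpleQs}, the plan is to start from the algebraic identity of Lemma~\ref{lem:SF_potsplit}, which rewrites the left-hand side of \eqref{eq:SF_SmallsimpleQs} as $\mathcal Q_0^{\rm ren}+\mathcal Q_1^{\rm ren}+\mathcal Q_2^{\rm ren}+\mathcal Q_3^{\rm ren}+\mathcal Q_4^{\rm ren}$, and then to estimate the five pieces one at a time. I would keep intact the purely ``$0Q$'' contributions — these will assemble $A_0$ — together with the pair-excitation term $A_2$, which is already one of the summands of $\mathcal Q_2^{\rm ren}$ in \eqref{eq:SF_DefQ2}, and dispose of everything else either by absorbing it into the non-negative term $\mathcal Q_4^{\rm ren}$ or by bounding it below by $-Ca(\rmu+n_0\ell^{-3})n_+$. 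That $\mathcal Q_4^{\rm ren}\geq 0$ is immediate from \eqref{eq:SF_DefQ4}: setting $B_{ij}:=Q_iQ_j+(P_iP_j+P_iQ_j+Q_iP_j)\,\omega(x_i-x_j)$ one has $\mathcal Q_4^{\rm ren}=\frac12\sum_{i\neq j}B_{ij}\,w(x_i,x_j)\,B_{ij}^{*}\geq 0$ because $w\geq 0$.

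For $\mathcal Q_3^{\rm ren}=\sum_{i\neq j}P_iQ_j\,w_1(x_i,x_j)\,Q_jQ_i+\mathrm{h.c.}$ (see \eqref{eq:SF_DefQ3}) I would use a Cauchy--Schwarz inequality: for every $\varepsilon>0$,
\begin{align*}
\mathcal Q_3^{\rm ren} \geq -\varepsilon\sum_{i\neq j}Q_iQ_j\,w_1(x_i,x_j)\,Q_jQ_i - \varepsilon^{-1}\sum_{i\neq j}P_iQ_j\,w_1(x_i,x_j)\,Q_jP_i .
\end{align*}
The first term on the right is reabsorbed into $\mathcal Q_4^{\rm ren}$: here one exploits that the weight in $\mathcal Q_3^{\rm ren}$ is $w_1=w(1-\omega)$ (see \eqref{eq:SF_w12}) and that $\mathcal Q_4^{\rm ren}$ carries the matching $\omega$-dressed cross terms, so that $\mathcal Q_3^{\rm ren}+\mathcal Q_4^{\rm ren}$ is, up to a controllable remainder, again a sum of squares. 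The second term is evaluated with the identity $P_iF(x_i,y)P_i=\ell^{-3}P_i\int F(x,y)\,dx$ recorded in the text: it equals $\ell^{-3}n_0\sum_i Q_i\big(\int w_1(x_i,y)\,dy\big)Q_i$, and since $\sup_x\int w_1(x,y)\,dy\leq Ca$ — a consequence of $\supp g\subset B(0,R)$ with $R\ll\ell$ and of $\int g=8\pi a$, see \eqref{eq:SF_integrals} — this is $\leq Ca\,\ell^{-3}n_0n_+$. Hence $\mathcal Q_3^{\rm ren}\geq -Ca\,\ell^{-3}n_0n_+$ after the absorption.

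For the three remaining pieces: applying $P_iFP_i=\ell^{-3}P_i\int F$ twice in \eqref{eq:SF_DefQ0} turns $\mathcal Q_0^{\rm ren}$ into $\frac{n_0(n_0-1)}{2\ell^6}\iint w_2-\rmu\frac{n_0}{\ell^3}\iint w_1$, the first two terms of $A_0$ in \eqref{eq:SF_A0}. In $\mathcal Q_2^{\rm ren}$ I would keep the last summand, which is exactly $A_2$; drop $\sum_{i\neq j}P_iQ_j\,w_2(x_i,x_j)\,Q_jP_i=\sum_{i\neq j}(Q_jP_i)^{*}\,w_2(x_i,x_j)\,(Q_jP_i)\geq 0$; bound $-\rmu\sum_i Q_i\int w_1(x_i,y)\,dy\,Q_i\geq -C\rmu a\,n_+$ using $\sup_x\int w_1(x,y)\,dy\leq Ca$; and estimate the ``hopping'' term $\sum_{i\neq j}P_iQ_j\,w_2(x_i,x_j)\,P_jQ_i$ by Cauchy--Schwarz together with $P_jFP_j=\ell^{-3}\int F$ and $\sup_x\int w_2(x,y)\,dy\leq Ca$ (the latter using $\int g\,\omega\leq Ca$, which follows from the scattering equation for $\omega$ and \eqref{eq:SF_integrals}; here $w_2$ is as in \eqref{eq:SF_w2}), which gives $\geq -Ca\,\ell^{-3}n_0n_+$. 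Finally $\mathcal Q_1^{\rm ren}$ in \eqref{eq:SF_DefQ1} is linear in the operators creating or annihilating a single excitation out of the condensate, with coefficient essentially $\big(\rmu-(n_0-1)\ell^{-3}\big)$ times $w_1$; completing the square against the weight $\iint w_1$ bounds it below by $-\frac14\big(\rmu-\frac{n_0-1}{\ell^3}\big)^2\iint w_1$ — which, added to the output of $\mathcal Q_0^{\rm ren}$, is exactly $A_0$ — together with excitation errors that are either absorbed into $\mathcal Q_4^{\rm ren}$ or bounded below by $-Ca(\rmu+n_0\ell^{-3})n_+$. Summing the contributions yields \eqref{eq:SF_SmallsimpleQs}.

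The bookkeeping based on $P_iFP_i=\ell^{-3}P_i\int F$ and on the elementary bounds for $\int w_1,\int w_2$ from \eqref{eq:SF_integrals} is routine. The hard part will be the interplay of $\mathcal Q_3^{\rm ren}$ with $\mathcal Q_4^{\rm ren}$ and the completion of the square out of $\mathcal Q_1^{\rm ren}$: one has to check that the pair-excitation quadratic form buried inside $\mathcal Q_4^{\rm ren}$, with all its $\omega$-dressed cross terms, genuinely absorbs the leftover of $\mathcal Q_3^{\rm ren}$ after Cauchy--Schwarz (this is where the precise form $w_1=w(1-\omega)$ is designed to match $\mathcal Q_4^{\rm ren}$), and that the square completed from $\mathcal Q_1^{\rm ren}$ reproduces precisely the coefficient $\frac14(\rmu-(n_0-1)\ell^{-3})^2$ of $\iint w_1$ appearing in $A_0$. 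For these details I would follow \cite[Lemma B.2]{FS}.
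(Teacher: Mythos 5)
Your proposal follows essentially the same route as the paper: start from the decomposition of Lemma~\ref{lem:SF_potsplit}, keep $\mathcal{Q}_4^{\rm ren}\geq 0$ to absorb the Cauchy--Schwarz remainder of $\mathcal{Q}_3^{\rm ren}$, simplify with $P_iF(x_i,y)P_i=\ell^{-3}P_i\int F\,dx$, and control the $1Q$-terms by completing a square against $\iint w_1$, deferring the detailed bookkeeping to \cite[Lemma B.2]{FS} exactly as the paper does. This matches the paper's argument both in structure and in level of detail, so the proposal is fine.
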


Notice how, using the identities \eqref{eq:SF_integrals} and $n_0+n_+ = n$, we can get the following estimate
\begin{align}\label{lem:Simplified}
&-\rmu \sum_{i=1}^n \int w_1(x,y)\,dy+
   \frac{1}{2} \sum_{i\neq j}  w(x_i, x_j) \nonumber \\
&  \geq A_2 + \frac{n^2}{2\ell^3}\left(8\pi a + \int g \omega(x) \,dx\right) 
  -\left(\rmu \frac{n}{\ell^3}+\frac{1}{4}\left(\rmu
  -\frac{n}{\ell^3}\right)^2\right) 8\pi a \ell^3 \nonumber \\
&\quad  -Ca (\rmu +n \ell^{-3} )(n_+ +1)
\end{align}
from \eqref{eq:SF_SmallsimpleQs}.

Next we combine the kinetic energy (without the gap) and the $A_2$-term from Lemma~\ref{lem:SF_appinteractionestimate} in the following estimate. The analysis of this term uses second quantization.

\begin{lem}\label{lem:SF_Bog-Calc}
On the $n$-particle sector, we have (with ${\mathcal T}$ from \eqref{eq:SF_Def_HB} and $A_2$ from Lemma~\ref{lem:SF_appinteractionestimate})
\begin{align}\label{eq:SF_Bogoliubov}
\sum_{i=1}^n \left(\mathcal{T}^{(i)} - b \ell^{-2} Q_i\right) + A_2 &\geq  - \frac{n(n+1)}{2\ell^3} \int g(x) \omega(x)\,dx
- C a \frac{n_0+1}{\ell^3} n_{+} \nonumber \\
&\quad - C n a \ell^{-3} \left(1   +   \frac{a^2 (n+1)^2}{\ell^2} + \frac{(n+1)R^2}{\ell^2} \right).
\end{align}
\end{lem}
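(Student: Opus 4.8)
The plan is to recognise the left-hand side, after second quantisation, as a Bogolubov (quadratic) Hamiltonian on the excited Fock space and to bound it below by its zero-point energy. On the fixed $n$-particle sector, let $\varphi_0 = \ell^{-3/2}\one_\Lambda$ be the condensate mode with operators $a_0^*,a_0$, and let $\widetilde a_x,\widetilde a_x^*$ denote the $Q$-projected field operators. Since $Q(b\ell^{-2})Q = b\ell^{-2}Q$ one has $\sum_{i=1}^n(\mathcal T^{(i)} - b\ell^{-2}Q_i) = \mathrm{d}\Gamma(\mathcal T_0)$ with $\mathcal T_0 := Q\,\chi(-\Delta - s^{-2}\ell^{-2})_+\,\chi\,Q \ge 0$ on $QL^2(\Lambda)$, while expanding $P+Q=\one$ and using that $w_1$ is real and symmetric gives
\[
A_2 = \frac{a_0 a_0}{2\ell^3}\,\mathcal W + \frac{a_0^* a_0^*}{2\ell^3}\,\mathcal W^*,\qquad \mathcal W := \iint w_1(x,y)\,\widetilde a_x^{*}\widetilde a_y^{*}\,dx\,dy .
\]
The operator behind the pairing is $\widetilde W_1 = M_\chi\,G\,M_\chi$, where $M_\chi$ is multiplication by $\chi(\cdot/\ell)$ and $G$ is convolution by $g$; in particular $\|\widetilde W_1\| \le 8\pi a\,\|\chi\|_\infty^2$ (since $|\widehat g|\le\widehat g(0) = 8\pi a$ because $g\ge0$), and — this is the structural point that makes everything fit — $\widetilde W_1$ carries the \emph{same} localisation factors $M_\chi$ as $\mathcal T_0$.

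The next step is to replace $a_0a_0$ and $a_0^*a_0^*$ (equivalently $\sqrt{n_0(n_0-1)}$) by the c-number $n$; the commutators this generates are handled by Cauchy--Schwarz and are bounded by $Ca(n_0+1)\ell^{-3}n_+$, which is one of the claimed error terms. What is left is the genuine quadratic Hamiltonian $\mathrm{d}\Gamma(\mathcal T_0) + \tfrac{n}{2\ell^3}(\mathcal W + \mathcal W^*)$, with pairing kernel $\kappa = n\ell^{-3}\widetilde W_1$. Because $\kappa$ and $\mathcal T_0$ carry the same $M_\chi$'s, conjugation by $M_{1/\chi}$ turns (heuristically) $\mathcal T_0$ into $(-\Delta - s^{-2}\ell^{-2})_+$, which for $s > 1/(2\pi)$ has a gap $\gtrsim \ell^{-2}$ on $QL^2(\Lambda)$, and turns $\kappa$ into $n\ell^{-3}G$ of norm $\lesssim na\ell^{-3}$; hence $\|\mathcal T_0^{-1/2}\kappa\,\mathcal T_0^{-1/2}\| \lesssim na\ell^{-1} \lesssim K^{-2} < 1$ once $K$ is large. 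The standard Bogolubov lower bound then gives
\[
\mathrm{d}\Gamma(\mathcal T_0) + \tfrac{n}{2\ell^3}(\mathcal W + \mathcal W^*) \;\ge\; -\frac{n^2}{4\ell^6}\,\mathrm{Tr}\!\big(\widetilde W_1\,\mathcal T_0^{-1}\,\widetilde W_1\big) \;-\; \big(\text{higher powers of }\mathcal T_0^{-1/2}\kappa\,\mathcal T_0^{-1/2}\big).
\]

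To evaluate the trace, the same $M_\chi$-cancellation gives $\widetilde W_1\,\mathcal T_0^{-1}\,\widetilde W_1 \approx M_\chi\,G(-\Delta)^{-1}G\,M_\chi$ — up to errors from the spectral cutoff, from $Q$, and from $\|[G,M_\chi]\| = O(R/\ell)$ — so a semiclassical trace estimate yields
\[
\mathrm{Tr}\!\big(\widetilde W_1\,\mathcal T_0^{-1}\,\widetilde W_1\big) \approx \frac{1}{(2\pi)^3}\int \chi^2(x/\ell)\,dx\int \frac{\widehat g(p)^2}{|p|^2}\,dp = \frac{\ell^3}{(2\pi)^3}\int \frac{\widehat g(p)^2}{|p|^2}\,dp = 2\ell^3\!\int g(x)\,\omega(x)\,dx ,
\]
using $\int\chi^2 = 1$, Plancherel, and the scattering identity $-\Delta\omega = \tfrac12 g$ (note $\int g\omega = 2\|\nabla\omega\|^2$, finite and uniform in $\int v$, so the argument covers hard cores). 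This is the main term $-\tfrac{n^2}{2\ell^3}\int g\omega$. The remaining errors then arise as: the passage $n^2 \rightsquigarrow n(n+1)$ together with the additive $-Cna\ell^{-3}$, absorbing the discreteness of the momentum lattice and the missing $p=0$ contribution to the trace; the term $-Cna\ell^{-3}\cdot\tfrac{(n+1)R^2}{\ell^2}$, from commuting $G$ past $M_\chi$ and from replacing $\widehat g(p)$ by $\widehat g(0) = 8\pi a$ on the relevant momenta; and $-Cna\ell^{-3}\cdot\tfrac{a^2(n+1)^2}{\ell^2}$, dominating the higher powers of $\mathcal T_0^{-1/2}\kappa\,\mathcal T_0^{-1/2}$ in the Bogolubov expansion.

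The bulk of the work — and the main obstacle — is to turn every ``$\approx$'' above into a genuine one-sided operator inequality with exactly these error sizes. The delicate points are: (i) the c-number substitution, where one must keep track of $n_0$ and $n_+$ rather than $n$; (ii) the comparison of $\mathcal T_0$, and of $\widetilde W_1\,\mathcal T_0^{-1}\,\widetilde W_1$, with its translation-invariant model, where one genuinely exploits that $\widetilde W_1$ is flanked by $M_\chi$'s so that the (near-)kernel of $\mathcal T_0$ — carried by functions supported where $\chi$ vanishes — is harmless; this is why $K$ must be taken large and where the choice of $\chi,s,b$ enters; and (iii) the semiclassical trace estimate with explicit, $\int v$-uniform remainders. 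These are exactly the computations of \cite[Appendix B]{FS}, with the simplifications of \cite{BFS}, which we follow.
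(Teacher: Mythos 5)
Your overall strategy (quadratic structure plus zero-point energy plus the scattering identity $\widehat\omega=\widehat g/(2k^2)$) is the right one, but the pivot of your argument fails: the relative bound $\|\mathcal T_0^{-1/2}\kappa\,\mathcal T_0^{-1/2}\|\lesssim na\ell^{-1}<1$ and the trace $\mathrm{Tr}(\widetilde W_1\,\mathcal T_0^{-1}\widetilde W_1)$ presuppose that $\mathcal T_0=Q\chi(-\Delta-s^{-2}\ell^{-2})_+\chi Q$ is invertible with a gap of order $\ell^{-2}$. It is not. The Laplacian in $\mathcal T$ is the one on ${\mathbb R}^3$ (sandwiched between the $\chi$'s), not the box Laplacian with its discrete momentum lattice, so $(-\Delta-s^{-2}\ell^{-2})_+$ annihilates the whole continuum of momenta $|p|\le s^{-1}\ell^{-1}$; the projection $Q$ removes only the constant function, and the genuine spectral gap $b\ell^{-2}Q_i$ has been deliberately subtracted on the left-hand side of the lemma (it is ``saved'' for \eqref{eq:SF_Gap}). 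Hence $\mathcal T_0$ has an infinite-dimensional (near-)kernel in the low-momentum region, $\mathcal T_0^{-1}$ does not exist, and your claim that for $s>1/(2\pi)$ the conjugated operator has a gap $\gtrsim\ell^{-2}$ on $QL^2(\Lambda)$ is false; your statement that the near-kernel is carried only by functions supported where $\chi$ vanishes is also not correct, since low-momentum plane waves lie in it regardless of support. The heuristic conjugation by $M_{1/\chi}$ compounds the problem, as $1/\chi$ is unbounded for the compactly supported $\chi$ of the paper.

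The paper's proof avoids exactly this trap: it never inverts $\mathcal T_0$, never performs a c-number substitution, and never compares with a translation-invariant model. It works with the pair operators $b_p=\ell^{-3/2}a_0^\dagger a(Q\chi_\Lambda e^{-ipx})$, for which $[b_p,b_p^\dagger]\le n$ on the $n$-particle sector, writes $A_2$ and the kinetic term as integrals in $p$ of quadratic expressions in $b_p,b_p^\dagger$ with diagonal symbol $\frac{\ell^3}{n+1}\tau(p)$, and then \emph{adds and subtracts} the diagonal term $2\widehat{W_1}(0)(b_p^\dagger b_p+b_{-p}^\dagger b_{-p})$. That added term is what regularizes the region where $\tau(p)=0$ (it guarantees ${\mathcal A}(p)^2\ge\widehat{W_1}(p)^2$ for all $p$), and its subtraction is precisely the source of the error $Ca\frac{n_0+1}{\ell^3}n_+$ in \eqref{eq:SF_Bogoliubov} --- the error you instead attributed to a c-number substitution. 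After a mode-by-mode completion of the square and the commutator bound, the zero-point integral is compared with $\int\widehat{W_1}^2/(2p^2)$, and the scattering identity gives $\int g\omega$. If you want to salvage your route, you must replace $\mathcal T_0^{-1}$ by $(\mathcal T_0+\delta)^{-1}$ with $\delta\sim\widehat{W_1}(0)\,n/\ell^3$ (or an equivalent diagonal repair) and track the price of $\delta$, at which point you are reproducing the paper's argument in different notation; as written, the proposal has a genuine gap.
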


\begin{proof}
We define the following operators on ${\mathcal F}_s(L^2({\mathbb R}^3)$,
\begin{align}
b_k:= \ell^{-3/2} a_0^{\dagger} a(Q \chi_{\Lambda} e^{-ikx}), \qquad
b_k^{\dagger}:= \ell^{-3/2}  a^{\dagger}(Q \chi_{\Lambda} e^{-ikx}) a_0,
\end{align}
where $a, a^{\dagger}$ are the usual creation/annihilation operators on ${\mathcal F}_s(L^2({\mathbb R}^3))$, and $a_0 := \ell^{-3/2}a(\theta)$, with $\theta(x) = \one_{\Lambda}(x)$.
Notice for later use that, on the $n$-particle sector,
\begin{align}\label{eq:SF_commutator}
[ b_k, b_k^{\dagger}]  &= \ell^{-3}\left( a_0^{\dagger} a_0 \langle \chi_{\Lambda} e^{-ikx},  \chi_{\Lambda} e^{-ikx} \rangle
 - a^{\dagger}(Q \chi_{\Lambda} e^{-ikx}) a(Q \chi_{\Lambda} e^{-ikx}) \right) \nonumber \\
 &\leq n.
\end{align}

A direct calculation gives
\begin{align}
A_2 = \frac{1}{2} (2\pi)^{-3} \int \widehat{W_1}(p) \left( b_p^{\dagger} b_{-p}^{\dagger} + b_p b_{-p} \right) \,dp.
\end{align}
with $W_1$ from \eqref{eq:SF_w12}.

Furthermore, on the $n$-particle sector, we have, using that $n_0 \leq n$, that
\begin{align}
\sum_{i=1}^n \left(\mathcal{T}^{(i)} - b \ell^{-2} Q_i\right) \geq \frac{1}{2}  (2\pi)^{-3} \int \frac{\ell^3}{n+1}\tau(p) \left( b_p^{\dagger} b_{p}+ b_{-p}^{\dagger}  b_{-p} \right) \,dp,
\end{align}
with $\tau(p) = (p^2-s^{-2}\ell^{-2})_{+}$.

Finally, we add and subtract the term,
\begin{align}
\frac{1}{2} (2\pi)^{-3} \int 2 \widehat{W_1}(0) \left( b_p^{\dagger} b_{p}+ b_{-p}^{\dagger}  b_{-p} \right) \,dp.
\end{align}
Notice that
\begin{align}
\frac{1}{2} (2\pi)^{-3} \int 2 \widehat{W_1}(0) \left( b_p^{\dagger} b_{p}+ b_{-p}^{\dagger}  b_{-p} \right) \,dp \leq C a \frac{n_0+1}{\ell^3} n_{+},
\end{align}
so the subtracted term is in agreement with \eqref{eq:SF_Bogoliubov}.

We therefore estimate on the $n$-particle sector, 
\begin{align}
&\sum_{i=1}^n \left(\mathcal{T}^{(i)} - b \ell^{-2} Q_i\right) + A_2 + \frac{1}{2} (2\pi)^{-3} \int 2 \widehat{W_1}(0) \left( b_p^{\dagger} b_{p}+ b_{-p}^{\dagger}  b_{-p} \right) \,dp\nonumber\\
&\geq
\frac{1}{2} (2\pi)^{-3} \int {\mathcal A}(p) \left( b_p^{\dagger} b_{p}+ b_{-p}^{\dagger}  b_{-p} \right)  + \widehat{W_1}(p)   \left( b_p^{\dagger} b_{-p}^{\dagger}+ b_{-p}  b_{p} \right) \,dp,
\end{align}
where
\begin{align}
{\mathcal A}(p) :=  \frac{\ell^3}{n+1}\tau(p) + 2 \widehat{W_1}(0).
\end{align}
Upon calculating a square $(\alpha b_p^{\dagger} + \beta b_{-p}) (\alpha b_p + \beta b_{-p}^{\dagger}) + (\alpha b_{-p}^{\dagger} + \beta b_{p}) (\alpha b_{-p} + \beta b_{p}^{\dagger})$ with appropriately adjusted coefficients $\alpha, \beta$---and afterwards discarding the square as being positive---we get the estimate (see \cite[Appendix A]{FS} for details)
\begin{align}
& {\mathcal A}(p) \left( b_p^{\dagger} b_{p}+ b_{-p}^{\dagger}  b_{-p} \right)  + \widehat{W_1}(p)   \left( b_p^{\dagger} b_{-p}^{\dagger}+ b_{-p}  b_{p} \right)\nonumber \\
& \geq
 -\frac{1}{2} \left({\mathcal A}(p) - \sqrt{{\mathcal A}(p)^2- \widehat{W_1}(p)^2}\right) \left( [ b_p , b_p^{\dagger} ] + [b_{-p} , b_{-p}^{\dagger} ]\right) \nonumber \\
 & \geq
 - n  \left({\mathcal A}(p) - \sqrt{{\mathcal A}(p)^2- \widehat{W_1}(p)^2}\right),
\end{align}
where the last inequality uses \eqref{eq:SF_commutator}.
Notice here how the added term $2  \widehat{W_1}(0)$ ensures that ${\mathcal A}(p)^2- \widehat{W_1}(p)^2\geq 0$.
Using again that $\frac{|\widehat{W_1}(p)|}{{\mathcal A}(p)} \leq \frac{1}{2}$ to expand the square root, we find
\begin{align}\label{eq:SF_BogCalc}
& \sum_{i=1}^n \left(\mathcal{T}^{(i)} - b \ell^{-2} Q_i\right) + A_2 + \frac{1}{2} (2\pi)^{-3} \int 2 \widehat{W_1}(0) \left( b_p^{\dagger} b_{p}+ b_{-p}^{\dagger}  b_{-p} \right) \,dp \nonumber \\
& \geq -\frac{n}{2} (2\pi)^{-3} \int 
  \frac{ \widehat{W_1}(p)^2}{2 {\mathcal A}(p)} \,dp -  Cn \int \frac{ \widehat{W_1}(0)^4}{2 {\mathcal A}(p)^3} \, dp.
\end{align}
To estimate the last integral, we split in $\{|p|^2 \geq 2 s^{-2} \ell^{-2}\} \cup \{|p|^2 \leq 2 s^{-2} \ell^{-2}\} $.
For the large values of $|p|$ we have $ {\mathcal A}(p) \geq \frac{\ell^3}{2(n+1)} p^2$. Also, for all values of $|p|$, we have $ {\mathcal A}(p) \geq 2 \widehat{W}(0)$. With this, it is easy to control the integral in $p$ and get
\begin{align}
n \int \frac{ \widehat{W_1}(0)^4}{2 {\mathcal A}(p)^3} \, dp \leq 
n \int \frac{ \widehat{W_1}(0)^3}{2 {\mathcal A}(p)^2} \, dp \leq 
C n \left( \left(\frac{n+1}{\ell^3}\right)^2 a^3 \ell + a \ell^{-3}\right),
\end{align}
in agreement with the error bound in \eqref{eq:SF_Bogoliubov}.

The first integral in \eqref{eq:SF_BogCalc} has to be calculated. We write it as
\begin{align}
-\frac{n}{2} (2\pi)^{-3} \int 
  \frac{ \widehat{W_1}(p)^2}{2 {\mathcal A}(p)} \,dp = I + II,
\end{align}
with
\begin{align}
I:= -\frac{n(n+1)}{2\ell^3} (2\pi)^{-3} \int 
  \frac{ \widehat{W_1}(p)^2}{2  p^2 } \,dp,
\end{align}
\begin{align}
II:= -\frac{n}{2} (2\pi)^{-3} \int  \frac{ \widehat{W_1}(p)^2}{2 {\mathcal A}(p)} -
  \frac{ \widehat{W_1}(p)^2}{2 \frac{\ell^3}{n+1} p^2 } \,dp.
\end{align}
In $II$, we again estimate by splitting the integral as above. Notice how $\tau(p) = p^2-s^{-2}\ell^{-2}$ when $|p|^2 \geq 2 s^{-2} \ell^{-2}$, so we easily get
\begin{align}
| II | &\leq C n a^2 \left( \int_{\{|p|^2 \leq 2 s^{-2} \ell^{-2}\} } a^{-1} + \frac{n+1}{\ell^3 p^2}\,dp +
\int_{\{|p|^2 \geq 2 s^{-2}  \ell^{-2}\} } \frac{ \frac{\ell^3}{n+1} s^{-2} \ell^{-2} + 2 \widehat{W_1}(0)}{(\frac{\ell^3}{n+1} )^2 p^4} \,dp\right) \nonumber \\
&\leq C n \left( a \ell^{-3} + a^3 \ell \left(\frac{n+1}{\ell^3}\right)^2 \right).
\end{align}
This is also a bound that can be absorbed in the error term in \eqref{eq:SF_Bogoliubov}.

Finally, we consider the integral $I$. Notice that $0 \leq W_1(x) \leq (1+C(R/\ell)^2) g(x)$, using \eqref{eq:SF_w12} and a simple bound on the convolution (recall that $\supp v \subset B(0,R)$).
Therefore,
\begin{align}
(2\pi)^{-3} \int 
  \frac{ \widehat{W_1}(p)^2}{2  p^2 } \,dp
  &= \iint \frac{W_1(x) W_1(y)}{2|x-y|} \,dx dy \nonumber \\
  &\leq (1+C(R/\ell)^2) (2\pi)^{-3} \int \frac{\widehat{g}(p)^2}{2p^2}\,dp \nonumber \\
  &= (1+C(R/\ell)^2) \int g(x) \omega(x)\,dx,
\end{align}
where we used \eqref{es:scatteringFourier} to get the last identity.
Upon inserting this in the expression for $I$,
this gives the main term in \eqref{eq:SF_Bogoliubov} as well as the error term depending on $R^2$.

This finishes the proof of Lemma~\ref{lem:SF_Bog-Calc}.
\end{proof}

Using Lemma~\ref{lem:SF_appinteractionestimate} (in the form of \eqref{lem:Simplified})
and Lemma~\ref{lem:SF_Bog-Calc}, we can now finish the proof of Theorem~\ref{thm:SF_LHY-Box}.
Combining the terms, remembering that the \enquote{gap} of the kinetic energy was saved, we find on the $n$-particle subspace,
\begin{align}
{\mathcal H}_{\Lambda}(\rmu))_{n} \geq
E_{{\rm Main}} + E_{{\rm gap}} + E_{{\rm error}},
\end{align}
with
\begin{align}
E_{{\rm Main}} &:=
\frac{n^2}{2\ell^3}8\pi a   -\left(\rmu \frac{n}{\ell^3}+\frac{1}{4}\left(\rmu
  -\frac{n}{\ell^3}\right)^2\right) 8\pi a \ell^3 \nonumber \\
  &= - 4\pi a \rmu^2 \ell^3 + 2\pi \frac{a}{\ell^3} (\rmu \ell^3 - n)^2,
\end{align}
as well as
\begin{align}\label{eq:SF_Gap}
E_{{\rm gap}} := \left( b \ell^{-2} - C a(\frac{n+1}{\ell^3} + \rmu) \right) n_{+},
\end{align}
and
\begin{align}
E_{{\rm error}} := - C n a \ell^{-3} \left(1   +   \frac{a^2 (n+1)^2}{\ell^2} + \frac{(n+1)R^2}{\ell^2} \right)-Ca \rmu.
\end{align}
When $n \leq (\Xi+1) \rmu \ell^3$, we find
\begin{align}
E_{{\rm error}} \geq - C \rmu a\left(1 + (R/a)^2 \sqrt{\rmu a^3}\right),
\end{align}
with $C$ depending on $\Xi$ and on $K$ (from the definition \eqref{eq:SF_def_ell} of $\ell$).

Also for $n \leq (\Xi+1) \rmu \ell^3$, we may choose $K$ large enough (depending on $\Xi$) to get
\begin{align}
E_{{\rm gap}} \geq \frac{b}{2\ell^2} n_{+} \geq 0.
\end{align}
In conclusion, we get, for $n \leq (\Xi+1) \rmu \ell^3$,
\begin{align}\label{eq:SF_FinalEnergy}
{\mathcal H}_{\Lambda}(\rmu))_{n} \geq - 4\pi a \rmu^2 \ell^3 + 2\pi \frac{a}{\ell^3} (\rmu \ell^3 - n)^2 - C \rmu a\left(1 + (R/a)^2 \sqrt{\rmu a^3}\right).
\end{align}
In particular, for $n \in [\Xi \rmu \ell^3,  (\Xi+1) \rmu \ell^3]$, and using that $\Xi \geq 3$,
\begin{align}
{\mathcal H}_{\Lambda}(\rmu))_{n} \geq \rmu a \left( 4\pi \rmu \ell^3 - C \left(1 + (R/a)^2 \sqrt{\rmu a^3}\right)\right) \geq 0,
\end{align}
for $\rmu a^3$ sufficiently small.

Upon inserting this as well as \eqref{eq:SF_FinalEnergy} (for the last group of particles) in \eqref{eq:SF_SumOverGroups}, we get the bound \eqref{eq:SF_EnergyBoxRes1}.

This finishes the proof of Theorem~\ref{thm:SF_LHY-Box}.
\end{proof}

\section{Localization to small boxes}\label{localization}

To prove \eqref{eq:SF_LowerBound} we will localize to smaller boxes. In order not to have to worry about how the particles distribute themselves between the boxes it is convenient to 
reformulate our problem on Fock space. 
Consider, for given $\rmu >0$, the following operator ${\mathcal
  H}_{\rmu}$ on the symmetric Fock space ${\mathcal F}_{\rm
  s}(L^2(\Omega))$. The operator ${\mathcal H}_{\rmu}$ commutes with
particle number and satisfies, with ${\mathcal H}_{\rmu,N}$ denoting
the restriction of ${\mathcal H}_{\rmu}$ to the $N$-particle subspace
of ${\mathcal F}_{\rm s}(L^2(\Omega))$,
\begin{align}\label{eq:SF_BackgroundH}
{\mathcal H}_{\rmu,N} &=\sum_{i=1}^N \left(- \Delta_i - 2\pi^2 L^{-2} Q_i\right)
+ \sum_{i<j} v^{\rm per}(x_i-x_j)
- 8\pi a \rmu N\\
&=\sum_{i=1}^N \left(- \Delta_i - 2\pi^2 L^{-2} Q_i - \rmu \int_{{\mathbb R}^3} g(x_i-y)\,dy \right)
+ \sum_{i<j} v^{\rm per}(x_i-x_j)
 \nonumber 
\end{align}
where $g$ is defined in terms of the scattering solution (see \eqref{eq:SF_gdef}).

We will prove the following lower bound on ${\mathcal H}_{\rmu}$.

\begin{thm}\label{thm:SF_LHY-Background}
Suppose that $v$ satisfies Assumption~\ref{SF_assump:v}.
Then there exists $C_1>0$ such that
\begin{align}
  {\mathcal H}_{\rmu} \geq -4\pi \rmu^2 a L^3 \left(1 + C_1 (\rmu a^3)^{1/2}\right),
\end{align}
for all $\rmu a^3$ sufficiently small.
\end{thm}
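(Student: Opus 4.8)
The strategy is a sliding/localization argument that transfers the box estimate of Theorem~\ref{thm:SF_LHY-Box} to the full periodic box $\Omega$. First I would cover $\Omega$ by (a continuum of) translates of the small box $\Lambda = (-\ell/2,\ell/2)^3$ using the localization function $\chi$ fixed above. Concretely, for $u \in \Omega$ set $\chi_{\Lambda,u}(x) := \chi((x-u)/\ell)$ (extended periodically); the normalization $\int \chi^2 = 1$ gives $\ell^{-3}\int_{\Omega} \chi_{\Lambda,u}(x)^2\,du = 1$, so integrating translated one-body operators over $u$ reproduces unweighted operators on $L^2(\Omega)$. The key algebraic point is that the two-body potential localizes cleanly: because $v^{\rm per}(x-y) = (\chi*\chi)((x-y)/\ell)\,W(x-y)$ by the definition \eqref{eq:SF_3.5} of $W$, and $(\chi*\chi)((x-y)/\ell) = \ell^{-3}\int_{\Omega}\chi_{\Lambda,u}(x)\chi_{\Lambda,u}(y)\,du$, one can write the full interaction $\tfrac12\sum_{i\neq j} v^{\rm per}(x_i-x_j)$ as $\ell^{-3}\int_{\Omega}$ of the localized interaction $\tfrac12\sum_{i\neq j} w_u(x_i,x_j)$ appearing in ${\mathcal H}_{\Lambda}(\rmu)$. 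Similarly the kinetic term $-\Delta - 2\pi^2 L^{-2}Q$ must be bounded below by $\ell^{-3}\int_{\Omega}$ of the localized kinetic operator $\mathcal{T}_u = Q_u[\chi_{\Lambda,u}(-\Delta - s^{-2}\ell^{-2})_+\chi_{\Lambda,u} + b\ell^{-2}]Q_u$ (up to an IMS-type localization error and using that the gap $2\pi^2 L^{-2}$ pays for the negative $-s^{-2}\ell^{-2}$ and $b\ell^{-2}$ shifts, since $\ell \ll L$); this is where the existence of the saved gap, and the largeness of $K$, enter.

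Next I would match the chemical-potential term. The $-8\pi a\rmu N$ in \eqref{eq:SF_BackgroundH} should be rewritten, again via $\ell^{-3}\int_{\Omega}\,du$, as the sum over translated boxes of $-\rmu\sum_i\int w_{1,u}(x_i,y)\,dy$, using the identity $\iint w_1(x,y)\,dx\,dy = 8\pi\ell^3 a$ from \eqref{eq:SF_integrals} together with $\ell^{-3}\int_{\Omega}\chi_{\Lambda,u}(x)^2\,du = 1$. Collecting the three pieces, one obtains
\begin{align}
{\mathcal H}_{\rmu,N} \;\geq\; \ell^{-3}\int_{\Omega} \big({\mathcal H}_{\Lambda,u}(\rmu)\big)\,du \;-\; (\text{localization error}),
\end{align}
where $({\mathcal H}_{\Lambda,u}(\rmu))$ is the unitary translate of the Fock-space operator from \eqref{eq:SF_Def_HB}. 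Applying Theorem~\ref{thm:SF_LHY-Box} to each fibre (the bound there is $N$-independent, which is exactly why the Fock-space formulation was chosen) gives $({\mathcal H}_{\Lambda,u}(\rmu)) \geq -4\pi\rmu^2 a\ell^3 - C_0\rmu^2 a\ell^3(\rmu a^3)^{1/2}$, and $\ell^{-3}\int_{\Omega}\,du$ turns $\ell^3$ into $L^3$, producing the claimed main term $-4\pi\rmu^2 a L^3$ and an error of the stated order $\rmu^2 a L^3(\rmu a^3)^{1/2}$.

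The main obstacle is controlling the localization error: splitting $-\Delta$ as an average of the localized kinetic operators is not an identity but an inequality with an error of the form $-C\ell^{-2}N$ coming from $|\nabla\chi_{\Lambda,u}|^2 \sim \ell^{-2}$ (an IMS localization estimate). One must check this error is of lower order than $\rmu^2 a L^3(\rmu a^3)^{1/2}$, which forces the relation $\ell = K^{-1}(\rmu a)^{-1/2}$ and $\rmu = \rho$: indeed $\ell^{-2}N \sim K^2 \rmu a \cdot \rmu L^3 = K^2\rmu^2 a L^3$, which is the same order as the main term, so a more careful treatment is needed — one must keep the positive part $(-\Delta - s^{-2}\ell^{-2})_+$ and the kept gap $b\ell^{-2}Q$, and show that on the orthogonal complement of the constants in each small box the localized kinetic energy, minus its share of the gap, still dominates the $|\nabla\chi_{\Lambda,u}|^2$ error after choosing $s$ small and $K$ large. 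Equivalently, one verifies that the low-momentum components ($|p|^2 \lesssim \ell^{-2}$) which are thrown away by $(\cdot)_+$ are harmless because they are compensated by the kinetic gap $2\pi^2 L^{-2}Q$ in the original operator together with the $b\ell^{-2}$ term, while high-momentum components pass through. A secondary (routine) point is handling the periodic wrap-around of $v^{\rm per}$ and $\chi_{\Lambda,u}$ near the boundary of $\Omega$, which is harmless once $\ell, R \ll L$. Once these estimates are in place, setting $\rmu = \rho = N/L^3$ gives precisely Theorem~\ref{thm:SF_LHY-Background}, and then the Remark after Theorem~\ref{thm:SF_BECScales} (the kinetic-gap bound and \eqref{eq:SF_LowerBound}) yields Theorem~\ref{thm:SF_BECScales}.
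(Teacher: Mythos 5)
Your plan is correct and follows essentially the same route as the paper: rewrite the chemical-potential and interaction terms exactly as a sliding average $\ell^{-3}\int_\Omega(\cdot)\,du$ of the localized potentials (Lemma~\ref{lem:LocPotEn}), bound the kinetic energy below by the corresponding average of $\mathcal{T}_u$ while saving the gap $2\pi^2L^{-2}Q$ (Lemma~\ref{lem:LocKinEn}), and apply the particle-number-independent box bound of Theorem~\ref{thm:SF_LHY-Box} fibrewise. The one point where the paper differs in presentation is that the kinetic localization is not an IMS estimate with an error term to be absorbed, but an exact operator inequality $\ell^{-3}\int_\Omega\mathcal{T}_u\,du\leq-\Delper-2\pi^2L^{-2}Q$ verified by diagonalizing in the plane-wave basis and checking the scalar bound $F(k)\leq k^2-2\pi^2L^{-2}$ for small and large $|k|$ separately --- which is precisely the ``more careful treatment'' your plan correctly anticipates is needed (and note that the largeness of $K$ is used in Theorem~\ref{thm:SF_LHY-Box}, not here).
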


Before proving Theorem~\ref{thm:SF_LHY-Background} we show that \eqref{eq:SF_LowerBound}---and therefore our main result Theorem~\ref{thm:SF_BECScales}---follows from it.

\begin{proof}[Proof of  \eqref{eq:SF_LowerBound} using Theorem~\ref{thm:SF_LHY-Background}]
To prove  \eqref{eq:SF_LowerBound}, let $\Psi \in \otimes_s^N L^2(\Omega) \subset {\mathcal F}_{\rm s}(L^2(\Omega))$ be normalized. Then, using Theorem~\ref{thm:SF_LHY-Background},
\begin{align}
\langle \Psi, \widetilde{H} \Psi \rangle &\geq \langle \Psi, {\mathcal
  H}_{\rmu} \Psi \rangle + 8 \pi a \rmu \rho L^3 \nonumber \\
&  \geq 4\pi a \rho^2 L^3 - 4\pi a (\rho - \rmu)^2 L^3 - 4\pi \rmu^2 a L^3C_1 (\rmu a^3)^{1/2}.
\end{align}
The estimate \eqref{eq:SF_LowerBound} follows upon choosing $\rmu = \rho$.
\end{proof}

\begin{proof}[Proof of Theorem~\ref{thm:SF_LHY-Background}]
We will localize to boxes of size $\ell$, with $\ell$ being as defined in \eqref{eq:SF_def_ell}.
In particular, it is important that $R < \ell < \frac{1}{2}L$.
It turns out that the Hamiltonians on the small boxes are all unitarily equivalent, so we really only have one box $\Lambda=[-\ell/2,\ell/2]^3$ and the Hamiltonian on this box can be identified with ${\mathcal H}_{\Lambda}(\rmu)$ defined in \eqref{eq:SF_Def_HB}. Therefore, Theorem~\ref{thm:SF_LHY-Background} will be a consequence of Theorem~\ref{thm:SF_LHY-Box}. Below we give the details of this argument.

For given $u \in {\mathbb R}^3$,
define the sharp localization function $\theta_u$ to the box $\Lambda(u) := u + [-\ell/2,\ell/2]^3$, i.e.
\begin{align}\label{eq:SF_Theta}
\theta_u := \one_{\Lambda(u)}.
\end{align}
The function $\theta_u$ defines an orthogonal projection $\theta_u: L^2(\Omega) \rightarrow L^2(\Lambda(u))$, where we recall that $\Omega$ is identified with the $L$-torus.
Similarly, define $P_u, Q_u$ to be the maps $P_u, Q_u: L^2(\Omega) \rightarrow L^2(\Lambda(u))$, defined by
\begin{align}\label{def: projections}
P_u \varphi := \ell^{-3} \langle \theta_u, \varphi\rangle \theta_u, \qquad  Q_u \varphi:= \theta_u \varphi - \ell^{-3} \langle \theta_u, \varphi \rangle \theta_u.
\end{align}
We also define, for $u \in {\mathbb R}^3$,
\begin{align}
\chi_u(x) = \chi(\frac{x-u}{\ell})=\chi_\Lambda(x-u).
\end{align}

Recall the definition of $W$  from \eqref{eq:SF_3.5}.
We will also need the periodic versions
\begin{align}
W^{\rm per}(x) = \sum_{j \in {\mathbb Z}^3} W(x+Lj),
\end{align}
and similarly for $\chi_u^{\rm per}$.

Define the localized potentials
\begin{align}\label{eq:SF_w_u}
w_u(x,y) := \chi_u(x) W(x-y) \chi_u(y), \qquad \text{ so that } \qquad w(x,y) = w_{u=0}(x,y),
\end{align}
and
\begin{align}\label{eq:SF_w_u-per}
w_u^{\rm per}(x,y) := \chi_u^{\rm per}(x) W^{\rm per}(x-y) \chi_u^{\rm per}(y).
\end{align}

We also recall the functions $W_1$ and $w_1$ defined in \eqref{eq:SF_w12}.
If we add a subscript $u$ we mean as above the translated versions $w_{1,u}(x,y)=w_1(x-\ell u,y-\ell u)$, and similarly we add a superscript `${\rm per}$` for the periodic versions as in \eqref{eq:SF_w_u-per}.

We then get by a direct calculation:
\begin{lem}\label{lem:LocPotEn}
For $2\ell <L$, we get
\begin{align}
  -\rmu \sum_{i=1}^N &\int  g(x_i-y)\,dy + \sum_{i<j} v^{\rm per}(x_i-x_j)    =
   \ell^{-3}\int_{\Omega} 
  {\mathcal W}_{u,N}\,du,
\end{align}
with
\begin{align}
   {\mathcal W}_{u,N}:=
-\rmu \sum_{i=1}^N \int_{\Omega} w_{1,u}^{\rm per}(x_i,y) \,dy + \sum_{i<j}  w_u^{\rm per}(x_i,x_j).
\end{align}
\end{lem}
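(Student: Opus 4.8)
The plan is to verify the identity by expanding the right-hand side and using the normalization of $\chi$ together with the convolution identity built into the definition of $W$. The key observation is the "sliding" identity: for a fixed pair of points $x,y$, integrating the localized potential over all box centers $u$ reproduces the original potential. Concretely, I would first treat the interaction term. Writing out $w_u^{\rm per}(x_i,x_j) = \chi_u^{\rm per}(x_i) W^{\rm per}(x_i-x_j) \chi_u^{\rm per}(x_j)$ and integrating $\ell^{-3}\int_\Omega \cdot\, du$, the factor $W^{\rm per}(x_i-x_j)$ does not depend on $u$ and comes out; what remains is $\ell^{-3}\int_\Omega \chi_u^{\rm per}(x_i)\chi_u^{\rm per}(x_j)\, du$. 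Since $\chi_u(x) = \chi((x-u)/\ell)$, a change of variables $u \mapsto u/\ell$ (being careful with the periodic wrapping, which is legitimate because $2\ell < L$ guarantees the supports fit inside $\Omega$ without overlap issues) turns this into $\int \chi(z)\chi(z - (x_i-x_j)/\ell)\, dz = (\chi * \chi)((x_i-x_j)/\ell)$, where I use that $\chi$ is even so $\chi*\chi$ at argument $t$ equals $\int \chi(z)\chi(z-t)\,dz$. Multiplying by $W^{\rm per}(x_i-x_j) = v^{\rm per}(x_i-x_j)/(\chi*\chi)((x_i-x_j)/\ell)$ (summed periodically) cancels the convolution factor and returns exactly $v^{\rm per}(x_i-x_j)$.

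For the one-body term I would argue analogously: in $\ell^{-3}\int_\Omega\!\left(-\rmu \sum_i \int_\Omega w_{1,u}^{\rm per}(x_i,y)\,dy\right) du$, recall from \eqref{eq:SF_w12} that $w_1(x,y) = \chi(x/\ell) g(x-y)\chi(y/\ell)$, so $w_{1,u}^{\rm per}(x_i,y) = \chi_u^{\rm per}(x_i) g^{\rm per}(x_i-y) \chi_u^{\rm per}(y)$. Here I would first carry out the $y$-integral: since $\supp g \subset B(0,R)$ and $R < \ell < \tfrac12 L$, for $x_i$ in the support of $\chi_u$ the variable $y$ ranges over a region where $\chi_u^{\rm per}(y)$ may be taken into account, but in fact the cleanest route is to integrate over $u$ first. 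Fixing $x_i$ and $y$, the integral $\ell^{-3}\int_\Omega \chi_u^{\rm per}(x_i)\chi_u^{\rm per}(y)\,du$ again equals $(\chi*\chi)^{\rm per}((x_i-y)/\ell)$ in the relevant regime, and since $g$ is supported in $B(0,R)$ with $R<\ell$, the convolution factor multiplying $g^{\rm per}(x_i-y)$ combines — but note $g$ is paired with $\chi*\chi$ only through $W_1 = g/(\chi*\chi(\cdot/\ell))$, so $w_{1}$ does \emph{not} carry an extra $\chi*\chi$; rather $w_1$ already has the bare $g$. So after the $u$-integration I obtain $g^{\rm per}(x_i - y)\,(\chi*\chi)^{\rm per}((x_i-y)/\ell)$, which is $v^{\rm per}(x_i-y)$ by the same cancellation? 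No — I must be careful: $w_{1,u}$ uses $g$ directly, and the $u$-average of $\chi_u\chi_u$ gives $(\chi*\chi)(\cdot/\ell)$, so the product is $g(x_i-y)(\chi*\chi)((x_i-y)/\ell)$, which is \emph{not} $v$. The resolution is that I should instead recognize that $\int_\Omega g(x_i-y)\,dy = \int_{{\mathbb R}^3} g$ is independent of $x_i$ (using $R < L/2$), and more importantly match it against the left side directly: the left side has $-\rmu\sum_i\int g(x_i-y)\,dy$, so I want the $u$-average of $w_{1,u}^{\rm per}$ integrated in $y$ to reproduce $g(x_i - y)$ integrated in $y$. Thus the correct bookkeeping is to integrate $y$ over $\Omega$ \emph{inside} $w_{1,u}^{\rm per}$, use that for $x_i$ fixed the factor $\chi_u^{\rm per}(y)$ integrates (after the $u$-average) to reconstruct the missing normalization, and conclude via $\int\chi^2 = 1$.

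The honest way to organize this, which I would adopt, is: (i) substitute the definitions \eqref{eq:SF_w_u-per}, \eqref{eq:SF_w_u}, \eqref{eq:SF_w12} into ${\mathcal W}_{u,N}$; (ii) for the two-body term, pull $W^{\rm per}(x_i-x_j)$ out of the $u$-integral and compute $\ell^{-3}\int_\Omega \chi_u^{\rm per}(x_i)\chi_u^{\rm per}(x_j)\,du = (\chi*\chi)^{\rm per,(\ell)}(x_i - x_j)$ where the latter denotes the periodization of $(\chi*\chi)(\cdot/\ell)$ — this is legitimate because the supports have diameter $<\ell < L/2$, so no periodic self-interaction occurs and one may identify this periodization with the single term $(\chi*\chi)((x_i-x_j)/\ell)$ when $|x_i - x_j|$ is small and it vanishes otherwise, matching $\supp v \subset B(0,R)$; multiply by $W^{\rm per}$ to cancel; (iii) for the one-body term, similarly compute $\ell^{-3}\int_\Omega \chi_u^{\rm per}(x_i) \chi_u^{\rm per}(y)\,du$, multiply by $g^{\rm per}(x_i - y)$, and integrate over $y \in \Omega$; after a change of variables and using $\int \chi^2 = 1$ this yields exactly $\int_{{\mathbb R}^3} g(x_i - y)\,dy$. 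The main obstacle — and the only place real care is needed — is step (ii)--(iii)'s handling of the periodization: one must check that the condition $2\ell < L$ (equivalently $R < \ell < L/2$) ensures that translating the box center $u$ around the torus never causes a localization window to "see itself", so that the periodic convolution $\ell^{-3}\int_\Omega \chi_u^{\rm per}(x)\chi_u^{\rm per}(y)\,du$ genuinely equals the non-periodic $(\chi*\chi)((x-y)/\ell)$ on the support of $v$ (resp.\ of $g$). Once that geometric fact is pinned down, the rest is the algebraic cancellation of $\chi*\chi$ against the denominator in \eqref{eq:SF_3.5} and the normalization $\int\chi^2 = 1$, and the lemma follows. $\hfill\square$
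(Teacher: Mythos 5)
Your overall route---computing the $u$-integral directly and using that, for $2\ell<L$, $\ell^{-3}\int_{\Omega}\chi_u^{\rm per}(x)\chi_u^{\rm per}(y)\,du=\bigl[(\chi*\chi)(\cdot/\ell)\bigr]^{\rm per}(x-y)$, which then cancels the denominator built into $W$ in \eqref{eq:SF_3.5}---is exactly the intended direct calculation, and your treatment of the two-body term is correct, including the geometric point that $\supp v\subset B(0,R)$ with $R<\ell$ and $2\ell<L$ ensures only the matching periodic copy survives in the product with $W^{\rm per}$, so the $u$-average of $\sum_{i<j}w_u^{\rm per}(x_i,x_j)$ returns $\sum_{i<j}v^{\rm per}(x_i-x_j)$ exactly.

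The gap is in the one-body term, and it is a real one: you take the last expression in \eqref{eq:SF_w12} at face value, i.e.\ $w_1(x,y)=\chi(x/\ell)\,g(x-y)\,\chi(y/\ell)$ with the bare $g$, you correctly notice that the $u$-average then produces $g(x_i-y)\,(\chi*\chi)((x_i-y)/\ell)$, and you then try to repair the mismatch by doing the $y$-integral and invoking $\int\chi^2=1$. That repair fails: what you obtain is $\int g(z)\,(\chi*\chi)(z/\ell)\,dz$, and by Cauchy--Schwarz $(\chi*\chi)(t)\le\int\chi^2=1$ with equality only at $t=0$, so for compactly supported $\chi$ this is strictly smaller than $\int g$; the normalization only fixes the value at $z=0$ and cannot restore an exact identity. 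The correct resolution is that the operative definition is the first equality in \eqref{eq:SF_w12}, $w_1=w\,(1-\omega)$, i.e.\ $w_1(x,y)=\chi(x/\ell)\,W_1(x-y)\,\chi(y/\ell)$ with $W_1=g/\bigl(\chi*\chi(\cdot/\ell)\bigr)$; the displayed bare-$g$ formula drops this denominator (only the $W_1$ version is consistent with the exact identity $\iint w_1=8\pi\ell^3a$ in \eqref{eq:SF_integrals}). With that reading, the one-body term is handled by precisely the same cancellation as your two-body computation: the $u$-average supplies $\bigl[(\chi*\chi)(\cdot/\ell)\bigr]^{\rm per}(x_i-y)$, which cancels the denominator in $W_1^{\rm per}$ to give $g^{\rm per}(x_i-y)$ exactly, and then $\int_{\Omega}g^{\rm per}(x_i-y)\,dy=\int_{{\mathbb R}^3}g(x_i-y)\,dy$ since $2R<L$. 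Note that the normalization $\int\chi^2=1$ is not needed anywhere in this lemma, and, as a minor point, $2\ell<L$ is not ``equivalently $R<\ell<L/2$'': the condition $R<\ell$ is a separate requirement (needed for $W$, $W_1$ to be well defined), which holds once $\rho_\mu a^3$ is small.
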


\begin{proof}
The proof follows by direct calculation of the $u$-integral.
\end{proof}

We will estimate the kinetic energy $-\Delper$ in $\Omega$ below by an
integral over kinetic energy operators in the boxes $\Lambda(u)$. The
following theorem is essentially \cite[Lemma 5.7]{BFS}.

\begin{lem}[Kinetic energy localizaton]\label{lem:LocKinEn} 
Assume that $2\ell < L$.
There exists a constant $b>0$, such that if $s$ is small enough, then 
\begin{equation}
 \ell^{-3} \int_{\Omega} {\mathcal T}_u  \,du\leq -\Delper - 2\pi L^{-2} Q_L,
\end{equation}
where $\Delper$ denotes the Laplacian on $\Omega$ with periodic boundary conditions, and
where
\begin{align}\label{eq:SF_DefTu}
\mathcal{T}_u  := Q_u^{*}\left[
\chi_u 
\left( -\Delta- s^{-2}\ell^{-2} \right)_{+}
\chi_u
+
b \ell^{-2}   \right] Q_u,
\end{align}
where $\Delta$ is the Laplacian on ${\mathbb R}^3$.
\end{lem}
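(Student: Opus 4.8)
The plan is to prove the operator inequality by reducing it to a one-body statement on $L^2(\Omega)$ and then testing against a single wave function, using Fourier analysis on the torus. First I would observe that both sides are one-body operators (a sum over particles of the same one-particle operator), so it suffices to prove, for every $\varphi \in L^2(\Omega)$,
\begin{align}
\ell^{-3}\int_\Omega \langle \varphi, {\mathcal T}_u \varphi\rangle\,du \leq \langle \varphi, (-\Delper) \varphi\rangle - 2\pi^2 L^{-2}\langle\varphi, Q_L\varphi\rangle.
\end{align}
Expanding the definition \eqref{eq:SF_DefTu}, the left side splits into a ``gap'' piece $b\ell^{-2}\,\ell^{-3}\int_\Omega\|Q_u\varphi\|^2\,du$ and a ``kinetic'' piece $\ell^{-3}\int_\Omega\langle \chi_u Q_u\varphi,(-\Delta - s^{-2}\ell^{-2})_+\,\chi_u Q_u\varphi\rangle\,du$. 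Here I would use that $Q_u = \theta_u - P_u$ is the projection onto the mean-zero part of $\varphi$ restricted to the box $\Lambda(u)$, so $\chi_u Q_u\varphi = \chi_u\theta_u\varphi - \chi_u\ell^{-3}\langle\theta_u,\varphi\rangle\theta_u$; since $\chi$ is supported in $[-1/2,1/2]^3$ we have $\chi_u\theta_u = \chi_u$, and the subtracted term integrates against the sliding variable in a controllable way.

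The heart of the argument is a sliding/partition-of-unity identity: because $\int\chi^2 = 1$, one has $\ell^{-3}\int_\Omega \chi_u(x)^2\,du = 1$ and more generally the family $\{\ell^{-3/2}\chi_u\}_u$ resolves the identity. The key step is then to compute $\ell^{-3}\int_\Omega\langle \chi_u\psi,(-\Delta-s^{-2}\ell^{-2})_+\chi_u\psi\rangle\,du$ for $\psi = \theta_u\varphi$ in Fourier variables. Working on the torus, write $\varphi = \sum_{k\in(2\pi/L){\mathbb Z}^3}\hat\varphi_k e_k$; the multiplication by $\chi_u$ spreads each Fourier mode, and the $u$-average of the quadratic form in the IMS/sliding spirit produces $\sum_k \hat F(k)\,|\hat\varphi_k|^2 |k|^2$ plus lower-order corrections coming from $\|\nabla\chi\|^2$-type terms (the localization error), where $\hat F(k) \le 1$ uniformly and $\hat F(k)\to 1$ for large $k$. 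The cutoff $(\,\cdot\,)_+$ at threshold $s^{-2}\ell^{-2}$ is what lets us discard the contribution of low momenta $|k|^2 \lesssim s^{-2}\ell^{-2}$ where the localization error would otherwise not be dominated; choosing $s$ small makes this threshold small on the scale $\ell^{-2}$, hence harmless, and choosing $b$ appropriately absorbs the residual localization error $\sim\ell^{-2}\|Q_u\varphi\|^2$ into the ``$+b\ell^{-2}Q_u$'' term with the correct sign. Finally the zero mode ($k=0$, i.e. the $P_L$ component) contributes nothing on the left since $Q_u$ kills constants up to boundary effects, while on the right the term $-2\pi^2L^{-2}Q_L$ is harmless because the first nonzero torus momentum already has $|k|^2 = (2\pi/L)^2 = 4\pi^2 L^{-2} \ge 2\cdot 2\pi^2 L^{-2}$, so we can afford to give away $2\pi^2 L^{-2}$ on each excited mode.

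I expect the main obstacle to be the careful bookkeeping of the localization error in the $u$-average: one must show that $\ell^{-3}\int_\Omega\langle\chi_u\psi, -\Delta\,\chi_u\psi\rangle\,du = \langle\psi,-\Delta\psi\rangle + O(\ell^{-2}\|\psi\|^2)$ with an \emph{explicit} constant in front of $\ell^{-2}$ depending only on $\|\nabla\chi\|_\infty$ and $R$ (the latter entering through the relation between $Q_u\varphi$ on overlapping boxes and the periodicity), and then that this constant is strictly smaller than the gain available from the $(\cdot)_+$ truncation plus the $b\ell^{-2}$ term once $s$ is small and $b$ is suitably fixed. Since the statement is quoted as ``essentially \cite[Lemma 5.7]{BFS}'', I would carry out the Fourier computation just far enough to exhibit the main term and the sign of the error, and cite \cite{BFS} for the remaining routine estimates.
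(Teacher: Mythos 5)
Your approach is essentially the paper's: the sliding average $\ell^{-3}\int_\Omega(\cdot)\,du$ makes the one-body operator exactly diagonal in the torus Fourier basis $\{e^{ikx}\}_{k\in 2\pi L^{-1}\mathbb{Z}^3}$, reducing the lemma to the scalar inequality $F(k)\le k^2-2\pi^2L^{-2}$ for $k\ne 0$, which the paper proves by a Taylor expansion giving $F(k)\le (Cs+b\beta)k^2$ for $|k|\le\tfrac12 s^{-1}\ell^{-1}$ (where your ``first nonzero momentum is $4\pi^2L^{-2}$'' observation applies) and separately for large $|k|$. The one point to sharpen is that large-$k$ regime: ``$\hat F(k)\le 1$ uniformly'' plus $k^2\ge 4\pi^2L^{-2}$ does not by itself yield the uniform loss $2\pi^2L^{-2}$ when $|k|$ is large, there the gain must come from the explicit surplus $-\tfrac18 s^{-2}\ell^{-2}+C$ left over after the subtraction inside $(-\Delta-s^{-2}\ell^{-2})_+$, which is exactly how the paper closes that case.
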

We leave the proof of Lemma~\ref{lem:LocKinEn} to the end of the section, and finish the proof of Theorem~\ref{thm:SF_LHY-Background} first.

By combining Lemmas~\ref{lem:LocKinEn} and \ref{lem:LocPotEn}, we see that
\begin{align}\label{eq:SF_sliding}
{\mathcal H}_{\rmu,N} \geq \ell^{-3}\int_{\Omega} ({\mathcal H}_{\Lambda(u)}(\rmu))_{N}\,du,
\end{align}
with
\begin{align}
({\mathcal H}_{\Lambda(u)}(\rmu))_{N}= \sum_{i=1}^N  \mathcal{T}^{(i)}_u+ {\mathcal W}_{u,N}.
\end{align}
Consider now the decomposition $L^2(\Omega) = L^2(\Lambda(u)) \oplus L^2(\Omega\setminus \Lambda(u))$.
Using that ${\mathcal F}_s(L^2(\Omega))$ is unitarily equivalent to ${\mathcal F}_s(L^2(\Lambda(u))) \otimes {\mathcal F}_s(L^2(\Omega\setminus \Lambda(u)))$ it easy to see that
\begin{align}
\inf \Spec {\mathcal H}_{\Lambda(u)}(\rmu) = \inf \Spec \widetilde{{\mathcal H}}_{\Lambda(u)}(\rmu),
\end{align}
where $\widetilde{{\mathcal H}}_{\Lambda(u)}(\rmu)$ denotes the operator $ {\mathcal H}_{\Lambda(u)}(\rmu)$ considered on ${\mathcal F}_s(L^2(\Lambda(u)))$.
Upon identifying $L^2(\Lambda(u)) \subset L^2(\Omega)$ with $L^2(\Lambda(u))\subset L^2({\mathbb R}^3)$, it is now clear that all the 
$\widetilde{{\mathcal H}}_{\Lambda(u)}(\rmu)$ are pairwise unitarily equivalent---in particular, they are all unitarily equivalent to ${\mathcal H}_{\Lambda}(\rmu)$ defined in \eqref{eq:SF_Def_HB}. Therefore, by Theorem~\ref{thm:SF_LHY-Box} we have
\begin{align}
({\mathcal H}_{\Lambda(u)}(\rmu))_{N} \geq -4\pi \rmu^2 a \ell^3 -  C_0 \rmu^2 a  \ell^3 (\rmu a^3)^{\frac{1}{2}},
\end{align}
for all $N$ and all $u \in \Omega$.
Upon inserting this bound in \eqref{eq:SF_sliding}, we get the desired bound on ${\mathcal H}_{\rmu}$.
This finishes the proof of Theorem~\ref{thm:SF_LHY-Background}.
\end{proof}

\begin{proof}[Proof of Lemma~\ref{lem:LocKinEn}]
Define, for $p \in {\mathbb R}^3$,
\begin{align}
\tau(p) := (p^2-s^{-2}\ell^{-2})_{+}.
\end{align}
An explicit calculation shows that for all $k,k' \in 2\pi L^{-1} {\mathbb Z}^3$,
\begin{align}
L^{-3} \int_{\Omega} \langle e^{ikx}, {\mathcal T}_u e^{ik'x} \rangle \,du =
\delta_{k,k'} \ell^3 F(k),
\end{align}
with
\begin{align}
F(k) = \ell^{-3}  \langle e^{ikx}, {\mathcal T}_{u=0} e^{ikx} \rangle.
\end{align}
So we have to show that, if $b,s$ are small enough, then for all $k \in 2 \pi L^{-1} {\mathbb Z}^3\setminus \{0\}$,
\begin{align}\label{eq:SF_k2}
F(k) \leq k^2 - 2 \pi^2 L^{-2}.
\end{align}
It is easy to reduce by scaling to the case $\ell = 1$.
We write, now with the convention that $\ell =1$,
$$
F(k) = F_1(k) + F_2(k),
$$
where (using that $\chi$ is even to avoid complex conjugates) 
\begin{align}
F_1= I + II + III,\qquad 
F_2= b \left( 1 - \widehat{\theta}(k)^2\right),
\end{align}
and 
\begin{align}
I &:= (2\pi)^{-3} (\tau * (\widehat{\chi_{\Lambda}}^2)(k), \nonumber \\
II &:= -2 (2\pi)^{-3}  [ (\tau \widehat{\chi_{\Lambda}})*\widehat{\chi_{\Lambda}}](k) \widehat{\theta}(k),\nonumber \\
III&:=  (2\pi)^{-3}  \widehat{\theta}(k)^2 (\tau * (\widehat{\chi_{\Lambda}}^2)(0), 
\end{align}

An explicit calculation of the Fourier transform of $\theta$ easily yields that
\begin{align}\label{eq:SF_F2}
 0 \leq F_2(k) \leq b \min\{  \beta k^2, 1\},
\end{align}
for suitable $\beta>0$.

The estimate on $F_1$ is divided in two according to the size of $|k|$. For $|k| < \frac{1}{2} s^{-1}$ we prove that
\begin{align}\label{eq:SF_Taylor}
0 \leq F_1(k) \leq C s k^2.
\end{align}
It is clear that $F_1(0)=0$ and $\nabla_k F_1(0)=0$ follows by parity. Therefore, by Taylor's Theorem it suffices to prove that
$|\partial_{k_i} \partial_{k_j} F_1| \leq C s$. This is done by calculation on each of the three terms, using the decay and regularity of $\chi_{\Lambda}$.
For example, for the term $III$ we have 
$|\partial_{k_i} \partial_{k_j}  \widehat{\theta}(k)^2| \leq C$, since $\theta$ is supported in the unit cube, and
$(\tau * (\widehat{\chi_{\Lambda}}^2)(0) \leq  \int_{\{|p| \geq s^{-1}\}} p^2 (\widehat{\chi_{\Lambda}}(p))^2\,dp \leq C s$, by the fast decay of $\widehat{\chi_{\Lambda}}$, which follows from its regularity. The restriction that $|k| \leq  \frac{1}{2} s^{-1}$ is used in the estimate of $I$ to ensure that $(p-k)^2 - s^{-2} \geq 0$ implies that $|p| \geq \frac{1}{2} s^{-1}$ and therefore giving decay of $\widehat{\chi_{\Lambda}}$. We leave the remaining details to the reader.

Using \eqref{eq:SF_Taylor} and \eqref{eq:SF_F2}, we get \eqref{eq:SF_k2} for $|k| \leq  \frac{1}{2} s^{-1}$ as soon as $b \beta + C s \leq \frac{1}{2}$ (with $C$ from \eqref{eq:SF_Taylor} and $\beta$ from \eqref{eq:SF_F2}) and recalling that $k^2 \geq 4 \pi^2 L^{-2}$ for all allowed $k$.

For $|k| \geq  \frac{1}{2} s^{-1}$, we estimate the three terms in $F_1$ individually.
Since $\chi$ is smooth, it is easy to see that
\begin{align}\label{eq:SF_2+3}
 | II | + | III | \leq C.
\end{align}
We estimate $I$ with $\varepsilon \in (0,1)$, and using $[x+y]_{+} \leq [x]_+ + [y]_+$ and that $\chi$ is even, as
\begin{align}
I&= (2\pi)^{-3} (\tau * (\widehat{\chi_{\Lambda}}^2)(k) \nonumber \\
&= 
(2\pi)^{-3}\int ((p-k)^2 - s^{-2}) \widehat{\chi_{\Lambda}}^2(p)
+
 [s^{-2}-p^2 + 2kp - k^2]_{+}  \widehat{\chi_{\Lambda}}^2(p)\,dp \nonumber \\
&\leq
k^2 -s^{-2} + [s^{-2} - (1 -\varepsilon) k^2]_{+} + \varepsilon^{-1} (2\pi)^{-3}  \int p^2 \widehat{\chi_{\Lambda}}^2(p) \,dp.
\end{align}
Therefore, for $|k| \geq \frac{1}{2} s^{-1}$ we get
\begin{align}
 I \leq k^2 -(1-\varepsilon) \frac{1}{4}  s^{-2}
+\varepsilon^{-1}  \int | \nabla \chi |^2\,dx.
\end{align}
Upon fixing $\varepsilon= \frac{1}{2}$ this becomes
\begin{align}\label{eq:SF_LeadingI}
 I \leq k^2 -  \frac{1}{8} s^{-2}
+C , 
\end{align}
for all $|k| \geq \frac{1}{2} s^{-1}$ and with $C$ independent of $s$. 

Upon adding \eqref{eq:SF_LeadingI}, \eqref{eq:SF_F2} and \eqref{eq:SF_2+3}, see that \eqref{eq:SF_k2} holds for $|k| \geq \frac{1}{2} s^{-1}$ for all fixed $b>0$, as soon as $s$ is small enough.
\end{proof}

\appendix 
\section{Facts about the scattering solution}\label{scattering}
We briefly recall the definition of the scattering length and related quantities. For further details see \cite[Appendix C]{LSSY}.

We assume that $v$ satisfies Assumption~\ref{SF_assump:v}.
The equation
\begin{align}\label{eq:SF_Scattering2}
  (-\Delta + \frac{1}{2} v(x) )(1-\omega(x)) =0,\qquad \text{ with } \omega \rightarrow 0, \text{ as } |x| \rightarrow \infty,
\end{align}
is called the scattering equation. 
The radial solution $\omega$ to this equation satisfies that $\omega(x) = a/|x|$ for some $a>0$ and for $x$ outside
$\supp\, v$. This constant $a$ is the {\it scattering length} of the
potential $v$.
The function $\omega$ is radially symmetric and
non-increasing with
\begin{align}
	0\leq \omega(x)\leq 1.\label{omegabounds}
\end{align}
We can rewrite \eqref{eq:SF_Scattering2} as
\begin{align}
\label{eq:SF_Scattering3}
-\Delta \omega = \frac{1}{2} g,
\end{align}
with
\begin{align}\label{eq:SF_gdef}
g := v(1-\omega),
\end{align}
and get
\begin{align}
a = (8\pi)^{-1} \int g(x)\,dx,
\end{align}
and that the Fourier transform satisfies
\begin{align}\label{es:scatteringFourier}
\widehat{\omega}(k) = \frac{\hat{g}(k)}{2 k^2}.
\end{align}

\begin{acknowledgments}
This research was partly supported by the Charles Simonyi Endowment and by an EliteResearch Prize from the Danish Ministry
of Higher Education and Science.
\end{acknowledgments}

\small

\end{document}